\relax
\documentclass[letterpaper]{article} 
\pdfoutput=1 
\usepackage[switch]{lineno}
\usepackage{aaai21}  
\usepackage{times}  
\usepackage{helvet} 
\usepackage{courier}  
\usepackage[hyphens]{url}  
\usepackage{graphicx} 
\urlstyle{rm} 
\usepackage{natbib}  
\usepackage{caption} 
\frenchspacing  
\setlength{\pdfpagewidth}{8.5in}  
\setlength{\pdfpageheight}{11in}  
\pdfinfo{
/Title (AugSplicing: Synchronized Behavior Detection in Streaming Tensors)
/Author (Jiabao Zhang, Shenghua Liu, Wenting Hou, Siddharth Bhatia, Huawei Shen, Wenjian Yu, Xueqi Cheng)
/TemplateVersion (2021.1)
} 

\usepackage[utf8]{inputenc} 
\usepackage{url}            
\usepackage{booktabs}       
\usepackage{amsfonts}       
\usepackage{nicefrac}       
\usepackage{microtype}      
\newsavebox{\measurebox}

\usepackage{bm}
\usepackage{soul}
\usepackage{amsmath}
\usepackage{amsthm}
\usepackage{bbding} 
\usepackage{xcolor}
\usepackage{enumitem}
\usepackage{subcaption}
\usepackage{amsfonts}
\usepackage{algpseudocode}


\urlstyle{same}

\usepackage{xspace}

\usepackage{amssymb}

\usepackage{algorithm}

\usepackage{mathrsfs}

\usepackage{etoolbox}
\usepackage{diagbox}
\usepackage{multirow}
\hyphenation{timestamps records}
\usepackage{rotating} 
\usepackage{lipsum}



\setcounter{secnumdepth}{2} 

%


\graphicspath{{./FIG/}}

\newtheorem{problem}{Problem}

\newtheorem{thm}{Theorem}

\newtheorem{example}{Example}

\newcommand{\bit}{\begin{compactitem}}
	\newcommand{\eit}{\end{compactitem}}
\newcommand{\ben}{\begin{compactenum}}
	\newcommand{\een}{\end{compactenum}}

\newcommand{\BComment}[1]{\Comment{\emph{#1}}}
\newcommand{\RNum}[1]{\uppercase\expandafter{\romannumeral #1\relax}}

\newcommand{\atn}[1]{\textcolor{red}{#1}}

\renewcommand{\atn}[1]{\textcolor{black}{#1}}

\newcommand{\zhangatn}[1]{\textcolor{black}{#1}}

\newcommand{\method}{\textsc{AugSplicing}\xspace}
\newcommand{\algom}{our algorithm\xspace}
\newcommand{\tensor}[1]{\mathcal{#1}}   

\newcommand{\TX}{\tensor{X}}

\newcommand{\TB}{\tensor{B}}

\newcommand{\spot}{\textsc{Spotlight}\xspace}

\newcommand{\dcube}{\textsc{D-Cube}\xspace}
\newcommand{\mzoom}{\textsc{M-Zoom}\xspace}
\newcommand{\sambaten}{{SamBaTen}\xspace}
\newcommand{\densestream}{\textsc{DenseStream}\xspace}

\newcommand{\cross}{\textsc{CrossSpot}\xspace}
\newcommand{\eigenspokes}{{EigenSpokes}\xspace}

\newcommand{\cpd}{{CPD}\xspace}
\newcommand{\catchcore}{{CatchCore}\xspace}

\usepackage{tikz}
\newcommand*{\circled}[1]{\lower.7ex\hbox{\tikz\draw (0pt, 0pt)%
    circle (.5em) node {\makebox[1em][c]{\small #1}};}}

\title{\method: Synchronized Behavior Detection in Streaming Tensors}
\author{
Jiabao Zhang\textsuperscript{\rm 1}, Shenghua Liu\textsuperscript{\rm 1 \#}, Wenting Hou\textsuperscript{\rm 2}, Siddharth Bhatia\textsuperscript{\rm 3}, \\
Huawei Shen\textsuperscript{\rm 1}, Wenjian Yu\textsuperscript{\rm 4 \#}, Xueqi Cheng\textsuperscript{\rm 1}   \\    
}

\affiliations{
\textsuperscript{\rm 1}	Institute of Computing Technology, Chinese Academy of Sciences\\
\textsuperscript{\rm 2} Beijing InnovSharing Co.Ltd\\
\textsuperscript{\rm 3} National University of Singapore\\
\textsuperscript{\rm 4}Dept. Computer Science \& Tech., Tsinghua University\\
{\{zhangjiabao,liushenghua,shenhuawei,cxq\}@ict.ac.cn, siddharth@comp.nus.edu.sg, yu-wj@tsinghua.edu.cn}
}

\begin{document}
\maketitle

\begin{abstract}
How can we track synchronized behavior in a stream of time-stamped tuples, such as mobile devices installing and uninstalling applications in the lockstep, to boost their ranks in the app store?  We model such tuples as entries in a streaming tensor, which augments attribute sizes in its modes over time. Synchronized behavior tends to form dense blocks (i.e.~subtensors) in such a tensor, signaling anomalous behavior, or interesting communities. However, existing dense block detection methods are either based on a static tensor, or lack an efficient algorithm in a streaming setting. 
Therefore, we propose a fast streaming algorithm, \method, which can detect the top dense blocks by incrementally splicing the previous detection with the incoming ones in new tuples, avoiding re-runs over all the history data at every tracking time step.
\method is based on a splicing condition that guides the algorithm (Section \ref{sec:method}). Compared to the state-of-the-art methods, our method is (1) effective to detect fraudulent behavior in installing data of real-world apps and find a synchronized group of students with interesting features in campus Wi-Fi data; (2) robust with splicing theory for dense block detection; (3) streaming and faster than the existing streaming algorithm, with closely comparable accuracy.
    \label{sec:abs}
\end{abstract}

\section{Introduction}
\label{sec:intro}

Given a stream of time-stamped tuples ($a_1$, $a_2$, $\cdots$, $a_n$, $t$), how can we spot the most synchronized behavior up to now in real-time?
{\let\thefootnote\relax\footnote { \# Corresponding authors     \label{ Corresponding-author }     }} \par 
\setcounter{footnote}{0} Such a problem has many real applications. In online review sites such as Yelp, let $a_1$ be a user, $a_2$ be a restaurant, $a_3$ be a rating score, and $t$ be the rating time. The most synchronized rating behavior of high scores indicates the most suspicious review fraud~\cite{hooi2016fraudar,jiang2014catchsync}. In application logs, $a_1$, $a_2$, $a_3$, and $t$
can represent a mobile device,
an app, installing time, and uninstalling time respectively. 
Highly synchronous installation and uninstallation from a
group of devices can reveal the most suspicious behavior of boosting target apps' ranks in an app store.
In terms of pattern discovery, synchronous connections and disconnections to the Wi-Fi access point (AP) in campus Wi-Fi connection logs can discover students that have the same classes of interest. 

Let such time-stamped tuples be entries of a tensor with multiple dimensions, such as \emph{user}, \emph{object}, and \emph{time} (Figure~\ref{fig:tensor_example}). 
Note that we call each dimension as a mode like \cite{lu2013multilinear}, and a two-mode tensor is a matrix.
Since tensors allow us to consider additional information especially the time,
the densest block (subtensor) of interest can identify the most synchronized behavior in time-stamped tuples~\cite{jiang2015general,shah2015timecrunch,shin2017d}.

In such a streaming tensor, the attribute size of \emph{time} mode is augmented over
time as shown in Figure~\ref{fig:tensor_example}. Other modes such as \emph{user}
and \emph{object} can also be augmented when an unseen user or object is observed.
    \zhangatn{Nowadays, dense subtensor detection methods for streaming tensors are essential. This is because it is much easier than in the past to collect large datasets with the advance of technology. Not only is the size of real data very large, but also the rate at which it arrives is high~\cite{akoglu2015graph}. For example, Facebook users generate billions of posts every day, billions of credit card transactions are performed each day, and so on. As such, whole data may be too large to fit in memory or even on a disk. On the other hand, we can think of this kind of data generation as streaming tensors as mentioned above. Thus, the methods which can update their estimations efficiently when the tensor changes over time are essential for dense subtensor detection problem.} 
However, many existing works on dense subtensor detection were
designed for static tensors given in a batch~\cite{shin2016mzoom,shin2017d,yikun2019no} and we refer to them as batch algorithms.
Although these batch algorithms are near-linear with the size of tuples (i.e. non-zero
entries) in a tensor,
re-running the algorithms at every time step for a streaming tensor can result in memory overload when we meet huge size datasets and quadratic time complexity. This causes limited scalability in a streaming setting due to the repeated computation on past tuples \cite{teng2016scalable}.
As for the state-of-the-art streaming algorithm,
\densestream~\cite{shin2017densealert}, maintained a fine-grained order (i.e.~D-order) to search for the densest subtensor. The order is updated for every single new tuple, limiting the detection speed.

Therefore we propose \method, a fast and incremental algorithm to approximate the up-to-date dense blocks in streaming tensors.
Without re-running batch algorithms, our heuristic algorithm based on the splicing condition reduces the search space, incrementally splices dense blocks of previous detections and
the new blocks detected only in an incoming tensor (right-side tensor in Figure~\ref{fig:tensor_example}). As such, \method can detect dense subtensors at every time step in real-time.
Experiments show that \method is the fastest, with comparable accuracy with the state-of-the-art methods. In summary, our main contributions are:

\begin{figure*}[t]
\begin{center}
\begin{subfigure}[t]{0.5\textwidth}
\centering
{\includegraphics[width=\linewidth]{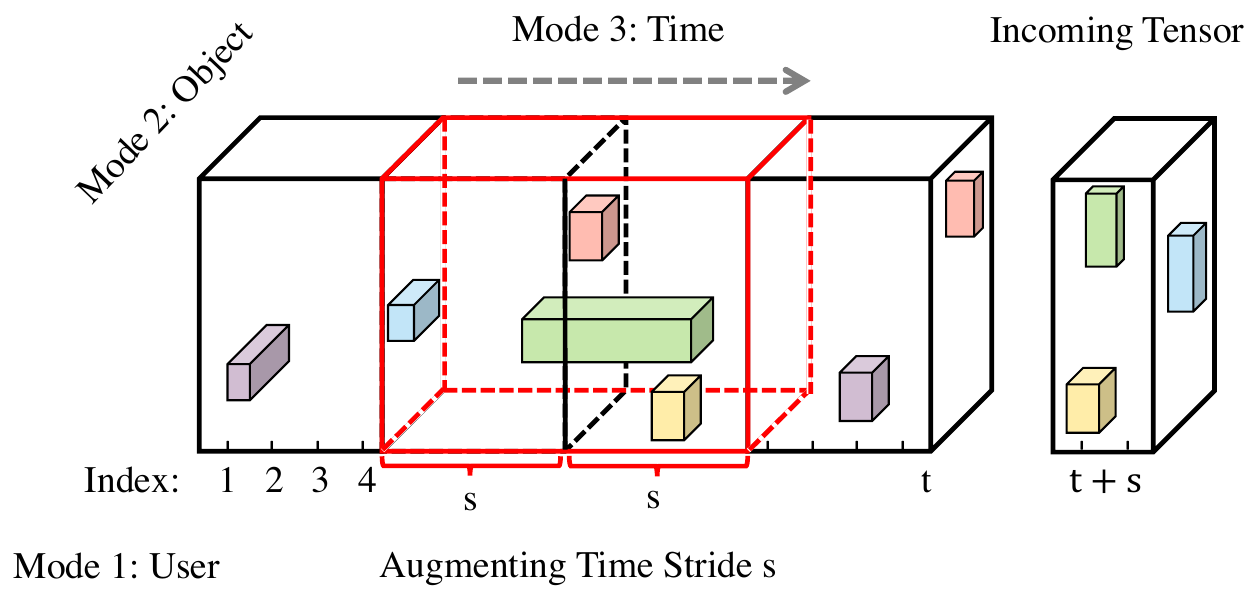}}
\caption{Streaming Tensor}
\label{fig:tensor_example}
\end{subfigure}\hspace{12mm}
\begin{subfigure}[t]{0.36\textwidth}
\centering
{\includegraphics[width=\linewidth]{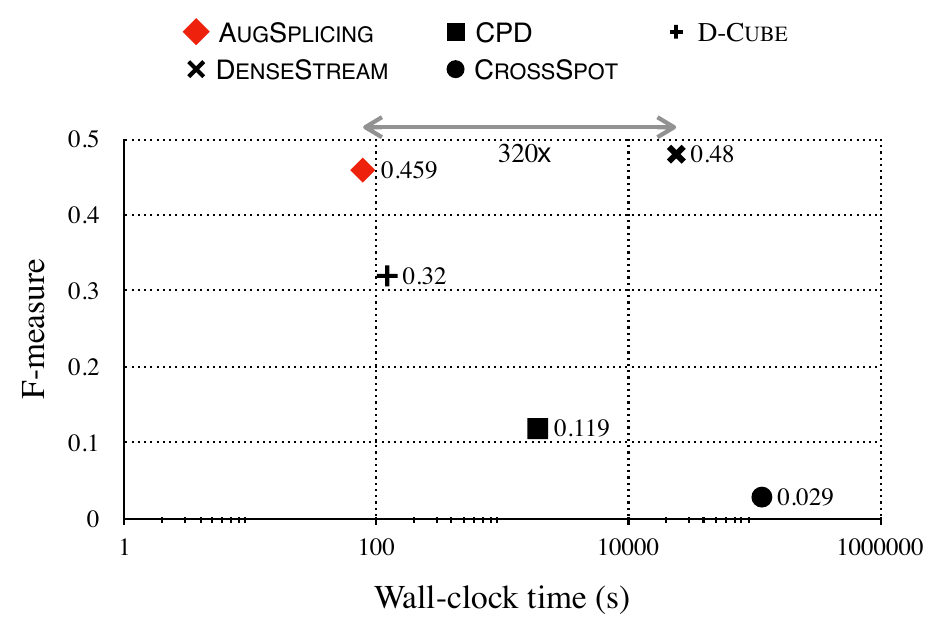}}
\caption{fast and accurate}
\label{fig:realaccurate}
\end{subfigure}
\end{center}
\caption{
    (a) A tensor contains dense blocks (subtensors),
    and an incoming tensor at a time step contains tuples in range ($t, t+s$]. 
    (b)\method is fastest while maintaining similar accuracy (in terms of F-measure) compared to state-of-the-art approaches}
\end{figure*}

\begin{enumerate}
\item \textbf{Fast and Streaming Algorithm:} 
    We propose a fast dense block detection algorithm 
	in streaming tensors, which is up to $320$ times faster than the current state-of-the-art algorithms ( Figure~\ref{fig:realaccurate}).
\item \textbf{Robustness:}
    \method is robust with splicing theory to do incremental splices for dense block detection.
\item \textbf{Effectiveness and Explainable Detection:} 
    Our algorithm achieves accuracy (in terms of F-measure) comparable to the best baseline, \densestream (Figure~\ref{fig:realaccurate}). 
    \method spots suspicious mobile devices that boost target apps' ranks in a recommendation list by synchronous installations and uninstallations in real-world data. 
	The result shows that the suspicious installations of $21$ apps on $686$
	devices mostly happened on the first $6$ days (Figure~\ref{fig:appcase3}) and
	the target apps were uninstalled within $3$ days (Figure~\ref{fig:appcase4}), which is a very unusual synchronized
	behavior among a group of devices and apps.
	Moreover, in real Wi-Fi data, we find a group of students with a similar schedule,
	showing periodic and reasonable activities on the campus (Figure~\ref{fig:wifi_distribution}).
\end{enumerate} 

{\bf Reproducibility}: Our code and datasets are publicly available at \url{https://github.com/BGT-M/AugSplicing}.

\section{Related Work}
\label{sec:related}
Multi-aspect tuples can also be represented as attributed edges in a rich graph, e.g.~users and objects as graph nodes, and rating scores and times as different attributes on graph edges. We, therefore, summarise the related research on 
dense block detection using both graphs and tensors (including two-mode matrices).

\subsection{Static Tensors and Graphs}
Dense subgraph detection has been extensively 
studied in~\cite{hooi2016fraudar,gibson2005discovering,charikar2000greedy}.
Spectral decomposition based methods, e.g., \textsc{SpokEn}~\cite{prakash2010eigenspokes} considers the \eigenspokes 
on EE-plot produced by pairs of eigenvectors to detect near-cliques in social network.
FRAUDAR~\cite{hooi2016fraudar} considers both node and edge suspiciousness as a metric to detect frauds (i.e.~dense blocks) and is also resistant to camouflage. \textsc{CrossSpot}~\cite{jiang2015general} proposes an intuitive, principled metric satisfying the axioms that any metric of suspiciousness should obey, and design an algorithm to spot dense blocks sorted by order of importance (``suspiciousness”).
HOSVD, CP Decomposition (\cpd)~\cite{kolda2009tensor} and disk-based algorithm~\cite{oh2017s} spot dense subtensors by Tensor decomposition. \mzoom~\cite{shin2016mzoom} and 
\dcube~\cite{shin2017d} adopt greedy approximation algorithms to detect dense subtensors with quality guarantees.
\catchcore~\cite{feng2019catchcore}  designs a unified metric optimized with gradient-based methods to find hierarchical dense
subtensors. 
\cite{liu2018contrast} optimizes the metric of suspiciousness from topology, rating time, and scores. 
ISG+D-spot~\cite{yikun2019no} constructs information sharing graph and finds dense subgraphs for the hidden-densest block patterns. Flock~\cite{shah2017flock} detects lockstep viewers in a live streaming platform.

However, these methods do not consider any temporal information, or only treat time bins as a static mode.

\subsection{Dynamic Tensors and Graphs}
In terms of dynamic graphs, some methods monitor the evolution of the entire graph and detect changes (density or structure) of subgraphs. \spot~\cite{eswaran2018spotlight} utilizes graph sketches to detect the sudden density changes of a graph snapshot in a time period.
SDRegion~\cite{wong2018sdregion} detects blocks consistently becoming dense or sparse in genetic networks. EigenPulse~\cite{zhang2019eigenpulse} is based on a fast spectral decomposition approach, single-pass PCA~\cite{yu2017single}, to detect the density surges. 
Other methods, like
\textsc{Midas} \cite{bhatia2020midas,bhatia2020real} and \textsc{MStream}~\cite{bhatia2020mstream} detect suddenly arriving groups of suspiciously similar edges in edge streams, but do not take into account the topology of the graph. \densestream~\cite{shin2017densealert} maintains dense blocks incrementally for
every coming tuple and updates dense subtensors when it meets an updating condition, limiting the detection speed.

As for clustering-based methods, \cite{manzoor2016fast} compare graphs based on the relative frequency of local substructures to spot anomalies. 
\cite{cao2014uncovering} uncovers malicious accounts that act similarly in a sustained period of time. Tensor decomposition-based methods, e.g., \sambaten~\cite{gujral2018sambaten} and OnlineCP~\cite{zhou2016accelerating} conduct the incremental tensor decomposition.
\atn{Summarization based methods, e.g.~\cite{shah2015timecrunch} finds temporal patterns by summarizing important temporal structures. \cite{araujo2014com2} uses iterated rank-1 tensor decomposition, coupled with MDL (Minimum Description Length) to discover temporal communities.}

Our method formulates the time-stamped tuples as a streaming tensor whose \emph{time} mode is constantly augmented,
such that the numerical value of entries in the previously observed tensor will not be changed. We incrementally splice incoming dense subtensors with the previous ones at each time step, achieving efficient results.

\label{sec:definition}
\section{Definitions and Problem}
We now give the notations used in the paper and describe our problem. Table~\ref{tab:symbols} lists the key symbols.
\begin{table}[ht]
\begin{center}

\begin{tabular}{p{0.20\columnwidth}<{\centering}|p{0.70\columnwidth}}
\toprule
{\bf Symbol}  & {\bf Definition} \\
\midrule
 $\TX, \TB$ & a tensor, and subtensor, i.e.~block \\
        $\TX(t)$ & $N$-mode tensor up to time $t$\\
        $N$ & number of modes in $\TX$ \\
        $e_{i_1,\cdots,i_N}(t)$ & {entry of $\TX(t)$ with index $i_1,\cdots,i_N$}\\ 
        $I_n(\cdot)$ & {set of mode-$n$ indices of tensor} \\
        $M(\cdot)$ & mass of tensor i.e.~sum of non-zero entries\\ 
        $S(\cdot)$ & size of tensor\\ 
        $g(\cdot)$ &  arithmetic degree density of tensor \\
        $s$ &  augmenting time stride\\ 
        $\TX(t, s)$ & $N$-mode augmenting tensor within \\ & time range $(t, t+s] $\\
        $k$ & number of blocks kept during iterations\\
        $[x]$ & $\{1,2,\cdots,x\}$\\
\bottomrule
\end{tabular}
\caption{Table of Symbols.}
\label{tab:symbols}
\end{center}
\end{table}

\textbf{Tensors} are multi-dimensional arrays as the high-order generalization of vectors ($1$-dimensional tensors) and matrices ($2$-dimensional tensors). The number of dimensions of a tensor is its order, denoted by $N$.
And each dimension is called a mode.
For an $N$-mode tensor $\TX$ with non-negative entries,
each $(i_1, \cdots, i_N)$-th entry is denoted by $e_{i_1 \ldots i_N}$. We use mode-$n$ to indicate the $n$-th mode as~\cite{de2000best,lu2013multilinear} do. 
Let $i_n$ be mode-$n$ index of entry $e_{i_1 \ldots i_n \ldots i_N}$. We define the mass of $\TX$ as $M(\TX)$ to be the sum of its non-zero entries,
and the size of $\TX$ as $S(\TX)=\sum_{i=1}^{N} |I_i(\TX)|$, where $I_i(\TX)$
is the set of mode-$i$ indices of $\TX$. Let \textbf{block} $\TB$ be a subtensor of $\TX$.
Similarly, $M(\TB)$ and $S(\TB)$ are mass and size of block $\TB$. 

Our problem can be described as follows:
\begin{problem}[\textbf{Synchronized Behavior Detection in Streaming Tensor}]
	Given a stream of time-stamped tuples, i.e.~streaming tensor $\TX(t)$, and an augmenting time stride $s$, find top $k$ dense blocks (i.e. subtensors) of $\TX(t)$ so far at every tracking time step.
\end{problem}

We use the \emph{arithmetic average mass} as the density measure of a block $\TB$ 
to avoid trivial solution as~\cite{shin2016mzoom,shin2017d}, i.e. $g(\TB) = \frac{M(\TB)}{S(\TB)}$.

\label{sec:model}
\section{Proposed Algorithm}
\label{sec:method}

In this section, we first theoretically analyze the splicing condition to increase density, and then guided by theory design a near-greedy algorithm to splice any two dense blocks. The overall algorithm (\method) and time complexity are
given in Section~\ref{secoverallalg}.

\subsection{Theoretical Analysis for Splicing}
\label{secsplicecond}
We analyze the theoretical condition that whether splicing (i.e.~merging partially) two dense blocks
can result in a block with higher density, as Figure~\ref{figspliceall} shows. 
We call such merging as \textbf{splicing}. 

\begin{thm}[Splicing Condition]
Given two blocks $\TB_1$, $\TB_2$ with $g(\TB_1) \geq g(\TB_2)$, 
$\exists \mathcal E \subseteq \TB_2$ such that 
    $g(\TB_1 \cup \mathcal E) > g(\TB_1)$
    if and only if
    \begin{equation}
	M(\mathcal E) > \sum_{n=1}^{N} r_n\cdot g(\TB_1) = Q\cdot g(\TB_1),
	\label{eqspcond}
    \end{equation}
    where $r_n = |I_{n}(\mathcal E) \setminus I_{n}(\TB_1)|$, i.e~the number of new mode-$n$ indices that $\mathcal E$ brings into $\TB_1$. $Q = \sum_{n=1}^{N}r_n$, i.e~the total number of new indices that $\mathcal E$ brings into $\TB_1$.
\label{theorem:SplicingCondition}
\end{thm}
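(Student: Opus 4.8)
The plan is to reduce the claim to an elementary manipulation of the density fraction $g(\TB) = M(\TB)/S(\TB)$, after pinning down exactly how the mass and the size change when $\mathcal E$ is spliced into $\TB_1$. Since I want an ``if and only if,'' I will keep every step of the reduction reversible, so both directions drop out together rather than needing separate arguments.

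First I would compute the size of the spliced block. Because $S(\cdot)$ depends only on the per-mode index sets, and $|A \cup B| = |A| + |B \setminus A|$, I get $S(\TB_1 \cup \mathcal E) = \sum_{n=1}^{N} |I_n(\TB_1) \cup I_n(\mathcal E)| = S(\TB_1) + \sum_{n=1}^{N} r_n = S(\TB_1) + Q$, using the definition $r_n = |I_n(\mathcal E)\setminus I_n(\TB_1)|$. This fixes the new denominator exactly. Next I would account for the mass: the key modeling fact is that splicing $\mathcal E$ contributes precisely $M(\mathcal E)$ of additional mass, so $M(\TB_1 \cup \mathcal E) = M(\TB_1) + M(\mathcal E)$, where $\mathcal E \subseteq \TB_2$ certifies that this added mass is actually available inside $\TB_2$.

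With the numerator and denominator of $g(\TB_1 \cup \mathcal E)$ in hand, the inequality $g(\TB_1 \cup \mathcal E) > g(\TB_1)$ becomes
\[
\frac{M(\TB_1) + M(\mathcal E)}{S(\TB_1) + Q} > \frac{M(\TB_1)}{S(\TB_1)}.
\]
Since $S(\TB_1) > 0$ and $S(\TB_1)+Q > 0$, I may cross-multiply without reversing the inequality; the common term $M(\TB_1)\,S(\TB_1)$ cancels, leaving $S(\TB_1)\,M(\mathcal E) > M(\TB_1)\,Q$, i.e.\ $M(\mathcal E) > \big(M(\TB_1)/S(\TB_1)\big)\,Q = Q\cdot g(\TB_1)$, which is exactly \eqref{eqspcond}. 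As each step is an equivalence, this settles both implications at once.

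The main obstacle I anticipate is the mass-accounting step, not the algebra. Since blocks are Cartesian products of index sets, naively unioning index sets can manufacture entries that pair a new index in one mode with an old index in another, so I must be careful to define $\mathcal E$ (and hence $M(\mathcal E)$) as the collection of entries \emph{actually added} to $\TB_1$, and then verify that these entries lie in $\TB_2$. Once that bookkeeping is made precise, the remainder is a one-line mediant-type inequality. I also note that the hypothesis $g(\TB_1)\geq g(\TB_2)$ is not needed for the equivalence itself; it merely fixes the convention that we grow the denser block $\TB_1$ by absorbing part of the sparser $\TB_2$.
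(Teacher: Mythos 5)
Your proposal is correct and takes essentially the same route as the paper: both reduce the claim to the identity $g(\TB_1 \cup \mathcal E) = \bigl(M(\TB_1)+M(\mathcal E)\bigr)/\bigl(S(\TB_1)+Q\bigr)$ and compare it with $g(\TB_1)$ by elementary fraction algebra. Your version is in fact slightly tidier than the paper's, since the reversible cross-multiplication delivers both directions at once (the paper proves only ``$\Leftarrow$'' explicitly and dismisses ``$\Rightarrow$'' with ``similarly''), and your remarks on the size accounting $S(\TB_1\cup\mathcal E)=S(\TB_1)+Q$, the disjoint-mass bookkeeping, and the dispensability of the hypothesis $g(\TB_1)\geq g(\TB_2)$ are all accurate points the paper leaves implicit.
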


\begin{proof}
    
    First, we prove the ``$\Leftarrow$'' condition.
    Based on the definition of $g(\cdot)$, we have

\begin{align*}
    \nonumber
    g(\TB_1\cup\mathcal E) &= \frac{M(\TB_1)+M(\mathcal{E})}{S(\TB_1)+Q} > 
    \frac{M(\TB_1)+Q\cdot g(\TB_1)}{S(\TB_1) + Q} \\
    &= \frac{S(\TB_1)\cdot g(\TB_1)+Q\cdot g(\TB_1)}{S(\TB_1) + Q}=
    g(\TB_1)
\end{align*}

Similarly, we can prove the ``$\Rightarrow$'' condition.
\end{proof}

We can see that while splicing blocks, 
new indices of size $Q$ are brought into
some modes of $\TB_1$, and only merging the block $\mathcal E$ with a large enough mass satisfying inequation~\eqref{eqspcond},
can increase $g(\TB_1)$.
Based on the theory, we design an effective algorithm to splice blocks as shown later.

\subsection{Splicing Two Blocks}
\label{secsplicetwo}

\begin{algorithm*}[htbp]
\caption{Splice two dense blocks}
\label{alg:splicetwoblocks}
\begin{algorithmic}[1]
\Require two dense blocks: $\TB_1$ and $\TB_2$, with $g(\TB_1) \geq g(\TB_2)$.
\Ensure new dense blocks
    \Repeat
        \Statex \textcolor{blue}{/* minimize the size of new indices, $Q$ */}
        \State $\bm q \leftarrow$ get set of modes that have to bring new indices into $\TB_1$ for splicing
        \State $Q \leftarrow |\bm q|$ \BComment{to minimize $Q$, considering
	only one new index from each mode in $\bm q$}
        \State $\bm H \leftarrow$ an empty max heap for blocks and ordered by block mass
	\For{each combination of new indices $(i_{q_1}, \cdots, i_{q_Q})$, $q\in\bm q$}
            \State $\mathcal E \leftarrow $block with entries $\{e_{i_1 \ldots i_{q_1} \ldots i_{q_Q} \ldots i_N} \in \TB_2 ~|~~ \forall n\in [N]\setminus \bm q, i_n \in I_n(\TB_1) \} $
	        \State push $\mathcal E$ into $\bm H$
        \EndFor
        \Statex \textcolor{blue}{/* maximize $M(\mathcal E)$, given $Q$ */ }
        \For{ $\mathcal E \leftarrow \bm H$.top() } 
            \If{$M(\mathcal E) > Q \cdot g(\TB_1)$ } \BComment{inequation~\eqref{eqspcond}}
            \State $\TB_1, \TB_2$ $\leftarrow$ update $\TB_1 \cup \mathcal E, \TB_2 \setminus \mathcal E$
            \State remove $\mathcal{E}$ from $\bm H$, and re-heapify $\bm H$
            \Else
                 \State \textbf{break}
            \EndIf
        \EndFor
    \Until{no updates on $\TB_1$ and $\TB_2$}
\State \textbf{return} new block $\TB_1$ of higher density, and residual dense block $\TB_2$
\end{algorithmic}
\end{algorithm*}

\begin{figure}[!htb]
        \center{\includegraphics[width=0.91\columnwidth]
        {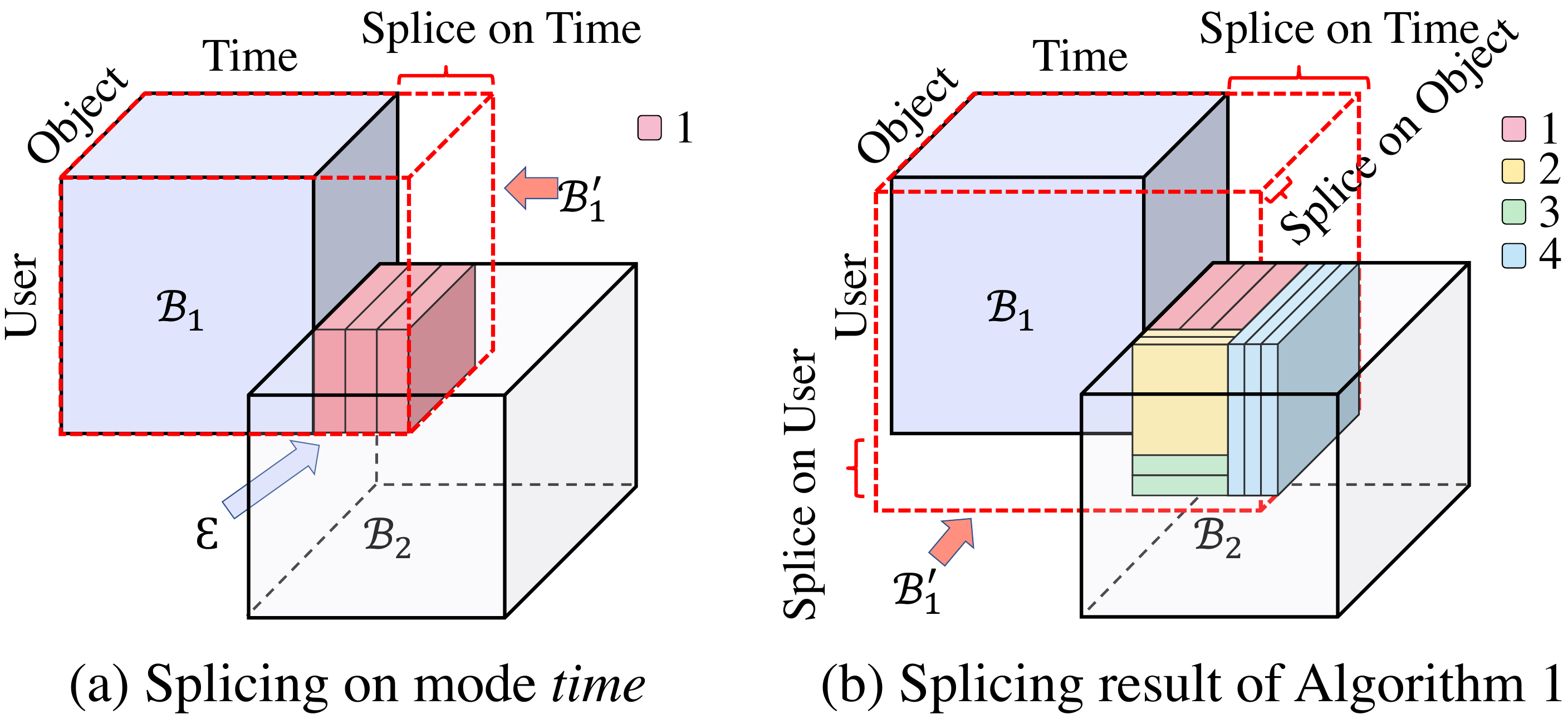}}
        \caption{An illustration of splicing two blocks of three modes, 
    i.e.~($user, object, time$). Given $\TB_1$ and $\TB_2$ with $g(\TB_1) \geq g(\TB_2)$ since there are no common indices on mode $time$, set $\bm q=\{time\}$ and $Q=1$.
	After splicing red blocks \textit{1} into $\TB_1$, all modes of two blocks are overlapped, and Algorithm~\ref{alg:splicetwoblocks} chooses one mode to bring new indices. As shown in (b), colored blocks \textit{2}, \textit{3}, \textit{4} are successively spliced into $\TB_1$, bringing new indices into all three modes of $\TB_1$.
$\TB_1'$ is new $\TB_1$ with higher density.}
 \label{figspliceall}
\end{figure}

The purpose of splicing two blocks is to find a higher-density block by moving
entries from one to another.
Thus, based on the above analysis, the smaller size of new indices (i.e.~smaller $Q$), and a larger mass
of the merging block, can greedily increase the density of the spliced block.
Our algorithm for splicing two given dense blocks is designed as
Algorithm~\ref{alg:splicetwoblocks}. 

Given $g(\TB_1) \geq g(\TB_2)$, the idea is to alternatively find the minimum size of new indices (i.e.~minimum $Q$), and maximum mass of block $\mathcal E$, given new indices for splicing. 

In terms of $Q$, we first need to decide the set of modes $\bm q$ that have to bring new indices into $\TB_1$ (line 2). Since there are no common indices in mode-$q$ of block $\TB_1$ and $\TB_2$, 
at least one new index has to be added to the mode-$q$ indices of $\TB_1$, then we add $q$ to $\bm q$. Thus the minimum $Q=|\bm q|$.
If all modes have common indices, then $\bm q$ is empty and we do
the following:
\begin{enumerate}
    \item Let the block $\mathcal E \subseteq \TB_2$ consist of entries of common indices. 
    \item Move non-zero entries of $\mathcal E$ into $\TB_1$ (if they exist), which increases the density of $\TB_1$ without bringing new indices.
    \item Choose one mode $q$ to splice. For each mode of $[N]$, we generate subblocks of $\TB_2$ by choosing one new index on this mode, and all indices overlapped with $\TB_1$ on other modes.
    Subblock with maximum mass was generated from mode $q$. In such a scenario, $Q=1$ to choose only one mode to splice. 
\end{enumerate}

For mass maximization, we use a max heap to organize blocks by their mass (line 4). Top of the max heap is always the maximum mass block. Then we enumerate all possible combinations of
a new index from each mode in $\bm q$ (lines 5-7) to build a max heap $\bm H$. 
Since the number of modes of blocks, $N$, is usually as small as $3\sim 5$ for real data, 
and the size of possible combinations is comparable to 
$S(\TB_2)$, given $\TB_2$ is a small-size block in original tensor $\TX$.
Moreover, according to inequation~\eqref{eqspcond}, only those blocks with
large enough masses are add into max heap $\bm H$.
Then we splice a maximum-mass block on top of $\bm H$,
iteratively increasing $g(\TB_1)$ 
and merging next top block satisfying $M(\mathcal E) > Q \cdot g(\TB_1)$,
until no large-mass blocks remain for merging (lines 8-13).

With first getting the minimum size of new indices, i.e.~minimum $Q$, and constantly merging maximum-mass block by choosing new indices into $\TB_1$, our algorithm ends until no updates can be made on $\TB_1$ and $\TB_2$.

\begin{example}
Figure~\ref{figspliceall} gives a running example of our algorithm. 
    In the beginning, $\TB_1$ and
$\TB_2$ have no common indices on mode $time$, thus $\bm q=\{3\}$.
Alg~\ref{alg:splicetwoblocks} splices on mode $time$
with red blocks \textbf{1} merged into $\TB_1$, forming new $\TB_1$ of higher
    density (i.e.~$\TB_1'$ in Figure~\ref{figspliceall}(a)). Note that each red block brings only one new index into $\TB_1$, i.e.~$Q=1$. 
    Afterward, all modes of two new blocks have common indices.
    Since $\TB_2$ doesn't have any non-zero entry of common indices with $\TB_1$, Alg~\ref{alg:splicetwoblocks} has to choose one mode $q$ to bring new indices into $\TB_1$ for splicing.
    $q$ is successively the mode $object$, $time$, $user$ in the example. In the end, colored blocks \textbf{1}, \textbf{2}, \textbf{3}, \textbf{4}
    are successively merged. A new block $\TB_1$ (i.e.~$\TB_1'$) with higher density, and residual block $\TB_2$ are returned (Figure~\ref{figspliceall}(b)).
\end{example}

\subsection{Overall Algorithm}
\label{secoverallalg}
In this section, we first describe the overall algorithm for incrementally detecting the densest blocks at each time step $t+s$, then analyze the time complexity of \method, which is near-linear with the number of non-zero tuples. 

Let bold symbols $\bm B(t)$ and $\bm C(t)$ be sets of
top $k+l$ dense blocks of previous $\TX(t)$ and incoming $\TX(t, s)$, where $l$ is a slack constant for approximating the top $k$ dense blocks with $l$ more blocks. 
Our overall algorithm is as follows:

\textbf{(a) Splice two dense blocks:} We iteratively choose two candidate blocks from $\bm B(t) \cup \bm C(t)$, 
denoted as $\TB_1$ and $\TB_2$ with $g(\TB_1) \geq g(\TB_2)$, 
then use Algorithm~\ref{alg:splicetwoblocks} to splice them. 

\textbf{(b) Iteration and Output:}
The splicing iteration stops until no blocks can be spliced, or reach the maximum number of epochs. 
Then, \method outputs top $k$ of $k+l$ dense blocks at time step
$t+s$, and moves on to the next time step with $k+l$ blocks.

\begin{thm}[Time Complexity]
The time complexity of \method at time step $t+s$ is $O(N^{2}(k+l)nnz(\TX(t,s))L(\TX(t,s)) + 2(k+l)nnz(\TB)\log S(\TB))$, where $L(\cdot)=\max_{n \in [N]}{|I_n(\cdot)|}$ and $nnz(\cdot)$ is the number of non-zero entries. 
\end{thm}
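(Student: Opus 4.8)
The plan is to bound the per-step runtime of \method as the sum of two independent contributions: (i) detecting the top $k+l$ dense blocks of the incoming tensor $\TX(t,s)$ together with building, inside Algorithm~\ref{alg:splicetwoblocks}, the candidate merging blocks $\mathcal E$; and (ii) the heap-based mass maximization that repeatedly selects the largest $\mathcal E$ to splice. I would show that (i) accounts for the first summand $N^{2}(k+l)\,nnz(\TX(t,s))\,L(\TX(t,s))$ and (ii) for the second summand $2(k+l)\,nnz(\TB)\log S(\TB)$, and then argue the overall cost is their sum, since both are incurred while processing the splices over pairs drawn from $\bm B(t)\cup\bm C(t)$.

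First I would analyze a single call of Algorithm~\ref{alg:splicetwoblocks} by its two phases. In the $Q$-minimization phase (lines 2--7), computing $\bm q$ compares the mode-$n$ index sets of $\TB_1$ and $\TB_2$ across all $N$ modes, costing $O(N\,L(\TX(t,s)))$, and building the candidate blocks $\mathcal E$ requires scanning the entries of the smaller block for each admissible new index. Since $N$ is small and, by the construction described after the algorithm, the enumeration in lines 5--7 is dominated by the single-mode case $Q=1$, the number of candidates stays $O(S(\TB_2))$ and the scanning work per repeat round is $O(N\,nnz(\TX(t,s)))$. In the mass-maximization phase (lines 8--13), each push and each pop of the max heap $\bm H$ costs $O(\log S(\TB))$, and the heap never holds more than $O(S(\TB))$ blocks, where $\TB$ denotes the largest block involved in the splice.

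Next I would aggregate over the outer \textbf{Repeat} loop and over the splices performed. The loop introduces new indices on at most $N$ modes, and the number of distinct new indices available on a mode is at most $L(\TX(t,s))$, so the loop runs $O(N\,L(\TX(t,s)))$ rounds; together with the per-round building cost $O(N\,nnz(\TX(t,s)))$ this yields $O(N^{2}\,nnz(\TX(t,s))\,L(\TX(t,s)))$ per splice, and multiplying by the $O(k+l)$ splices performed per step (bounded by the iteration-and-output structure of the overall algorithm) gives the first summand. For the heap, the key is an amortized argument: each non-zero entry of the moved block is pushed into $\bm H$ and later removed at most a constant number of times, so the total number of heap operations over all rounds telescopes to $O(nnz(\TB))$ rather than $O(nnz(\TB))$ per round; at $O(\log S(\TB))$ each, and with the factor $2$ accounting for the symmetric maintenance of both $\TB_1$ and $\TB_2$, this gives $O(nnz(\TB)\log S(\TB))$ per splice and hence the second summand.

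The hard part will be the amortized accounting just sketched. A naive per-round bound would multiply the heap and building costs by the number of \textbf{Repeat} rounds and overcount badly; the proof must instead charge each push, pop, and entry scan to a non-zero entry that is moved at most a bounded number of times, so that the sums telescope into $nnz(\TB)$ and $nnz(\TX(t,s))$ rather than those quantities times the round count. A secondary subtlety is controlling the combinatorial enumeration of index combinations in lines~5--7 when $Q>1$; here I would rely on the paper's observation that the mode count $N$ is a small constant and that $\TB_2$ is a small subblock of $\TX$, so the number of combinations stays comparable to $S(\TB_2)$ and the bound does not blow up exponentially in $N$.
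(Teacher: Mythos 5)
There is a genuine gap, and it concerns what the first summand actually measures. In the paper's proof, the term $O(N^{2}(k+l)\,nnz(\TX(t,s))\,L(\TX(t,s)))$ is simply the cost of running the batch detection algorithm (\dcube) on the incoming tensor $\TX(t,s)$ to obtain the new top $k+l$ dense blocks $\bm C(t)$; it is quoted directly from that algorithm's known complexity bound, not derived from the splicing procedure. You state that your phase (i) includes this detection step, but your derivation never bounds it: everything in your accounting of the first summand comes from the candidate-building work inside Algorithm~\ref{alg:splicetwoblocks}, reconstructed via a round count of $O(N\,L(\TX(t,s)))$ times a per-round scan of $O(N\,nnz(\TX(t,s)))$. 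Even granting those loose estimates, the cost of detecting $\bm C(t)$ in the first place remains completely unaccounted for, so the theorem's bound is not established. The fix is a one-line appeal to the batch algorithm's complexity, which is exactly the move the paper makes; the splice-side building cost does not need to carry the $N^{2}\cdot L$ factor at all.

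The second summand also rests on an amortization claim that conflicts with the algorithm as written. You assert that within a splice each entry is pushed into and removed from $\bm H$ at most a constant number of times, but the \textbf{Repeat} loop rebuilds $\bm H$ from scratch in every round (line 4) and re-enumerates the candidate blocks, so a per-entry constant-push charge does not hold for heap \emph{construction}. The paper sidesteps this by charging only the updates and re-heapify operations that occur when entries are actually merged (lines 10--11): in the worst case all $2(k+l)$ blocks of $\bm B(t)\cup\bm C(t)$ merge into one, so at most $2(k+l)\,nnz(\TB)$ entries are ever spliced, each costing $O(\log S(\TB))$ since the heap has size $O(S(\TB))$. Note also that the factor $2$ is not, as you suggest, from symmetric maintenance of $\TB_1$ and $\TB_2$; it counts the $2(k+l)$ blocks (the $k+l$ previous ones plus the $k+l$ incoming ones) whose entries can all be moved in the worst case. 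Relatedly, your claim of only $O(k+l)$ splice calls per step is not justified by the overall algorithm, which iterates over pairs until no splice succeeds or a maximum number of epochs is reached; the paper's global worst-case accounting over total moved entries avoids having to bound the number of calls at all.
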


\begin{proof}
    At time step $t+s$, 
    the incoming tensor is $\TX(t,s)$, and the complexity for detecting 
    new top $k+l$ dense blocks is $O(N^{2}(k+l)nnz(\TX(t,s))L(\TX(t,s))$
    according to [20].

    Let $\TB$ be a block of the maximum non-zero entries,
    and the largest size among splicing blocks.
    Compared to building a max heap, a more time-consuming procedure is the iteration of updating and re-heapifying when new entries are merged into a block.
    Considering the worst case that all the blocks are merged into one, at most
    $2(k+l)nnz(\TB)$ entries are spliced, i.e.~the maximum number of updates in the max heap. 
    Therefore the time complexity for iterative splices is at most
    $O(2(k+l)nnz(\TB)\log S(\TB)))$, as the heap size is $O(S(\TB))$.
    Thus the complexity of \method at time step $t+s$ is $O(N^{2}(k+l)nnz(\TX(t,s))L(\TX(t,s)) + 2(k+l)nnz(\TB)\log S(\TB))$.
\end{proof}
Since $nnz(\TB) = O( nnz(\TX(t,s)) )$ for proper stride $s$,
our algorithm is near-linear in the number of incremental 
tuples $nnz(\TX(t,s))$ in practice as \cite{shin2017d} shows, which ensures near-linear in the number non-zero entries of streaming tensors. 

\section{Experiments}
\label{sec:experi}
\begin{figure*}[t]
\centering
\begin{subfigure}{0.35\textwidth}
  \centering
  {\includegraphics[width=\linewidth]{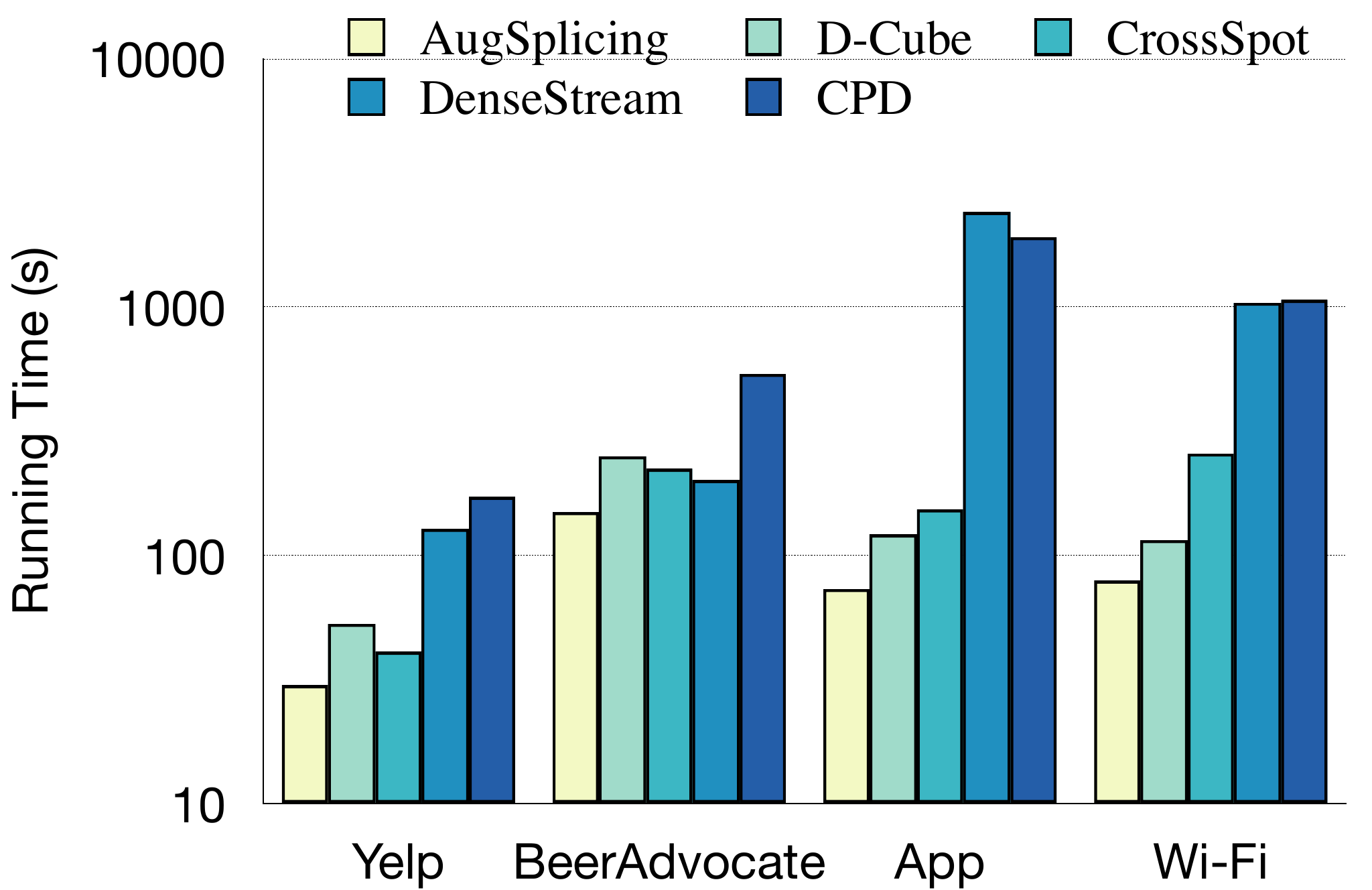}}
    \caption{\method is the fastest.}
    \label{fig:time}
\end{subfigure}
\quad
\begin{subfigure}{0.26\textwidth}
\centering
{\includegraphics[width=\linewidth]{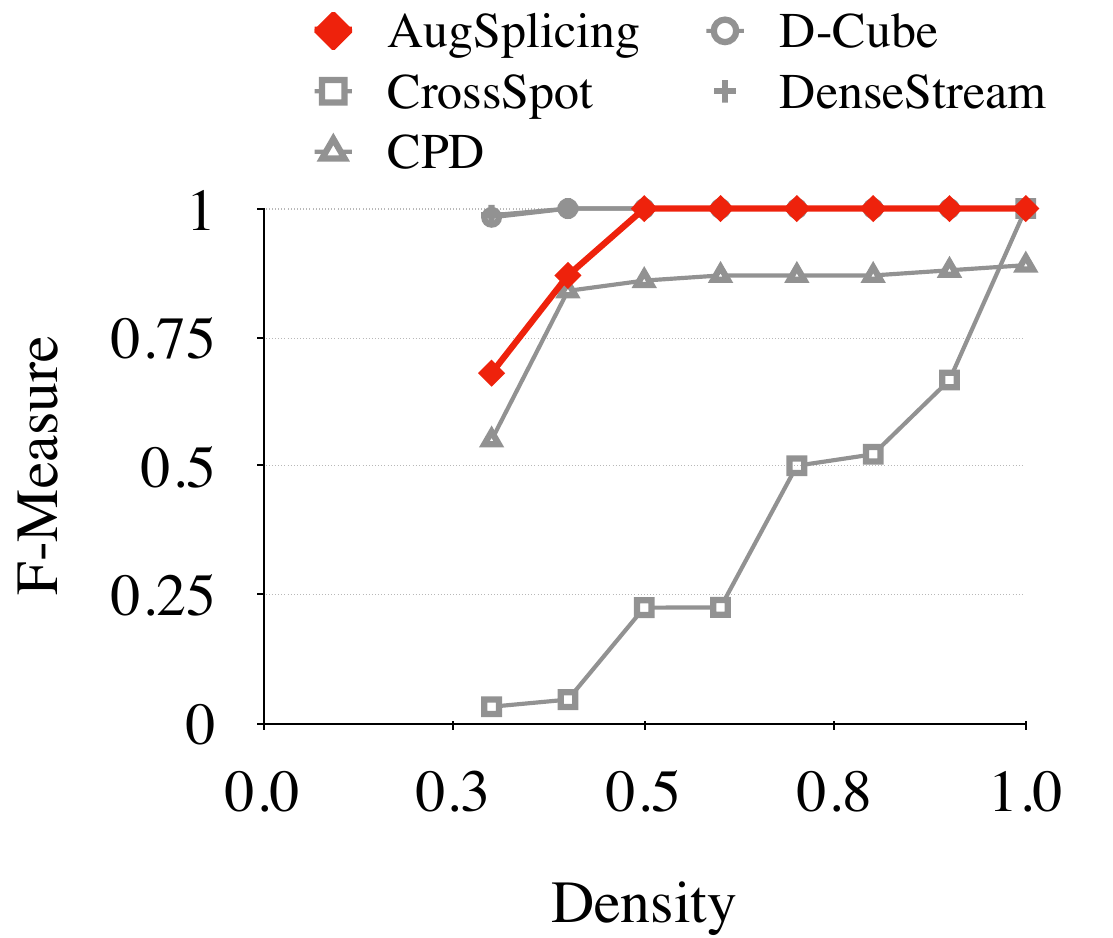}}
\caption{performance on Yelp}
\label{fig:yelp_inject}
\end{subfigure}
\quad
\begin{subfigure}{0.26\textwidth}
  \centering
  {\includegraphics[width=\linewidth]{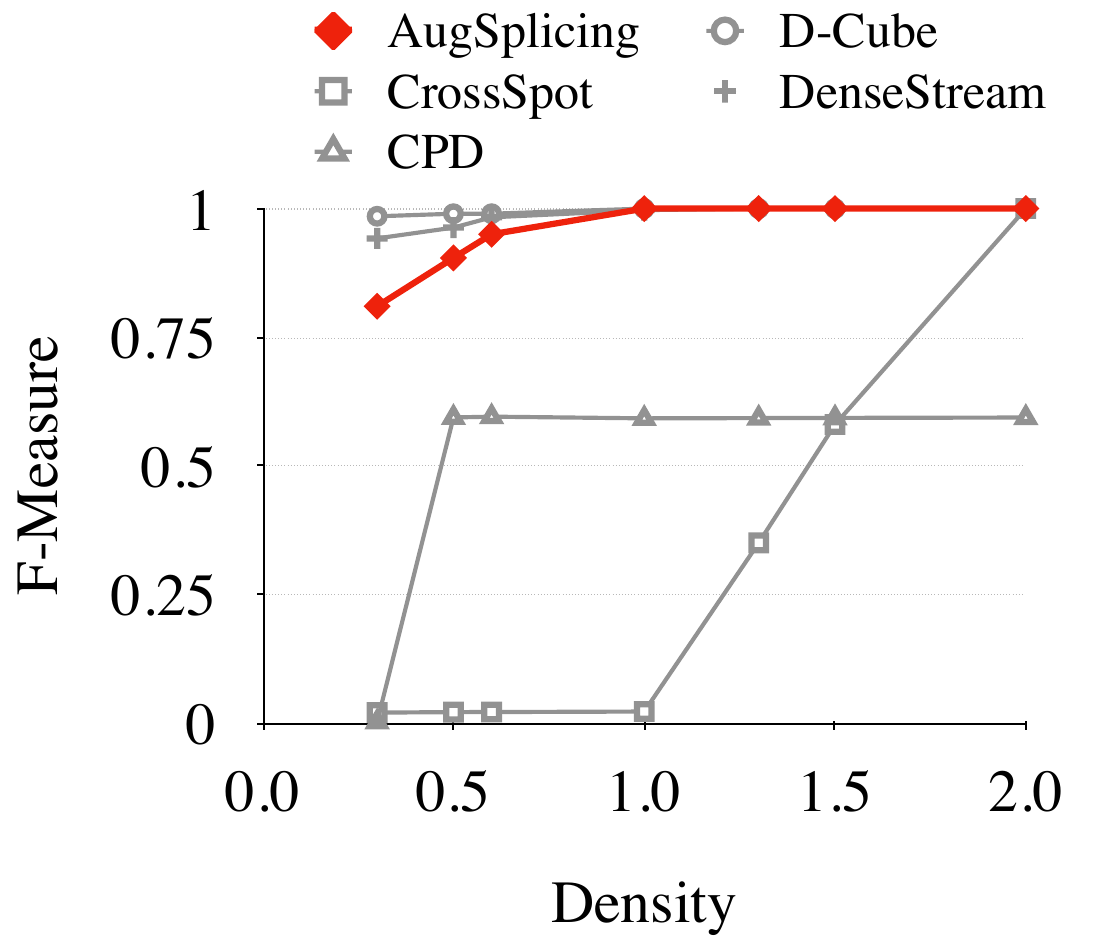}}
  \caption{performance on App}
  \label{fig:app_inject}
 \end{subfigure}
\quad
\caption{\method is fast and accurate. (a) \method\ is $320\times$ faster than baselines. \atn{
	In (b)-(c), our method has accuracy (F-measure) comparable to the state-of-the-art methods: \densestream and \dcube,
    especially when injected fraudulent density is larger than 0.5.
    }
}
\label{fig:result}
\end{figure*}

\begin{figure*}[!htb]
\centering
\begin{subfigure}{0.24\textwidth}
    \centering
    {\includegraphics[width=\linewidth]{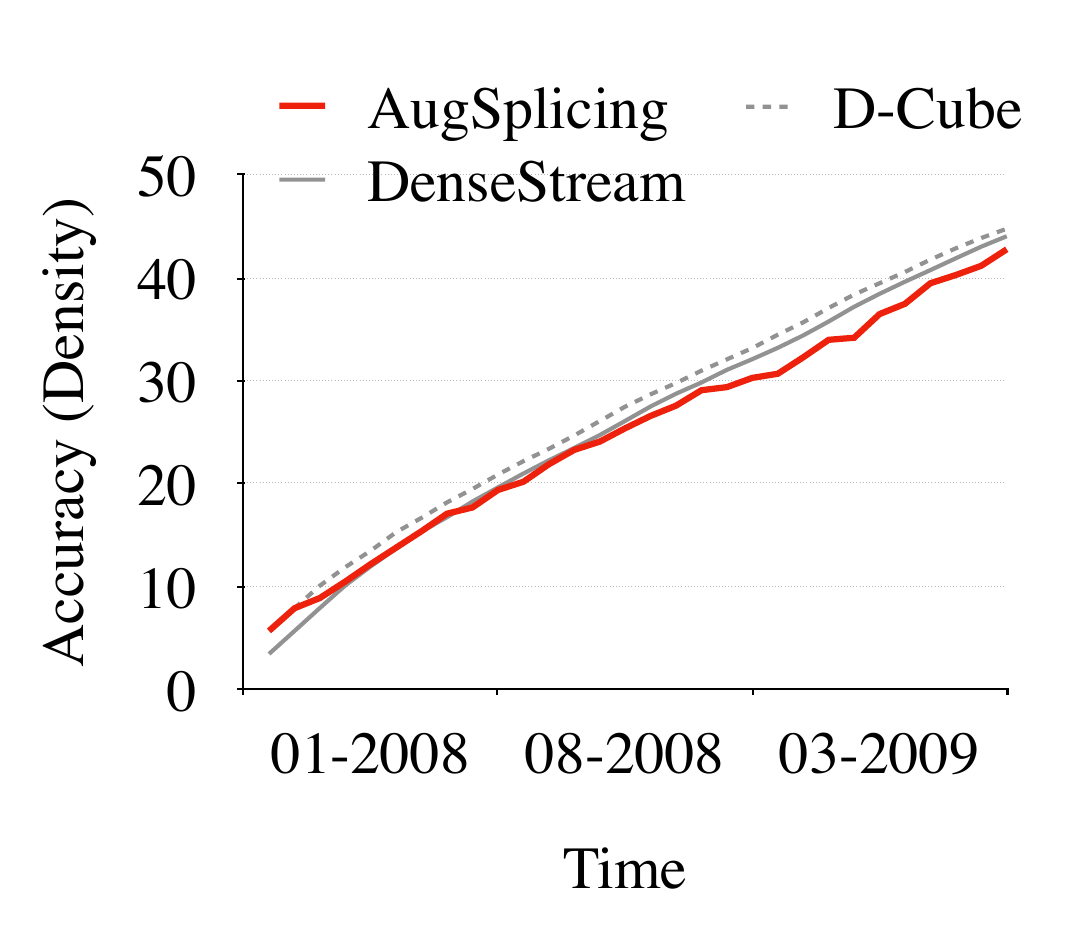}}
    \caption{BeerAdvocate data}
    \label{fig:beer_density}
\end{subfigure}
\begin{subfigure}{0.24\textwidth}
    \centering
    {\includegraphics[width=\linewidth]{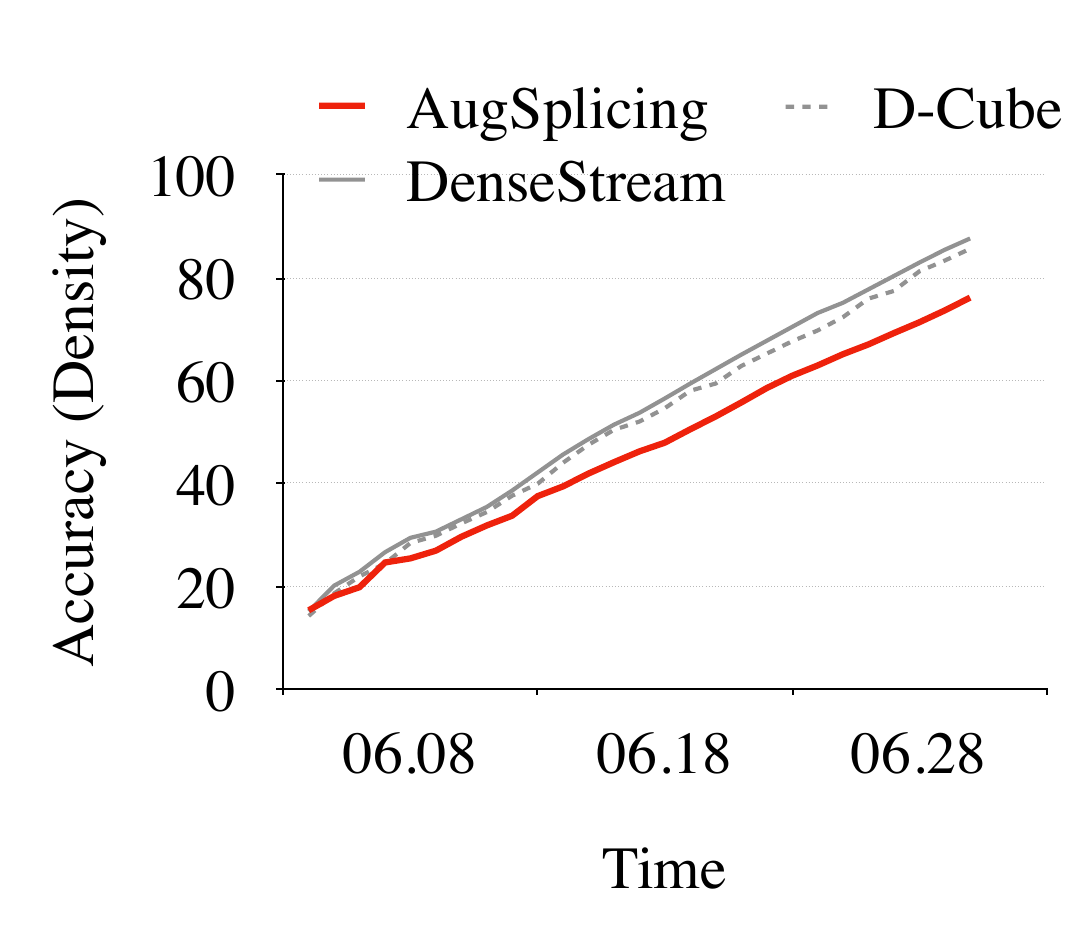}}
    \caption{App data}
    \label{fig:app_density}
\end{subfigure}
\begin{subfigure}{0.24\textwidth}
\includegraphics[width=\linewidth]{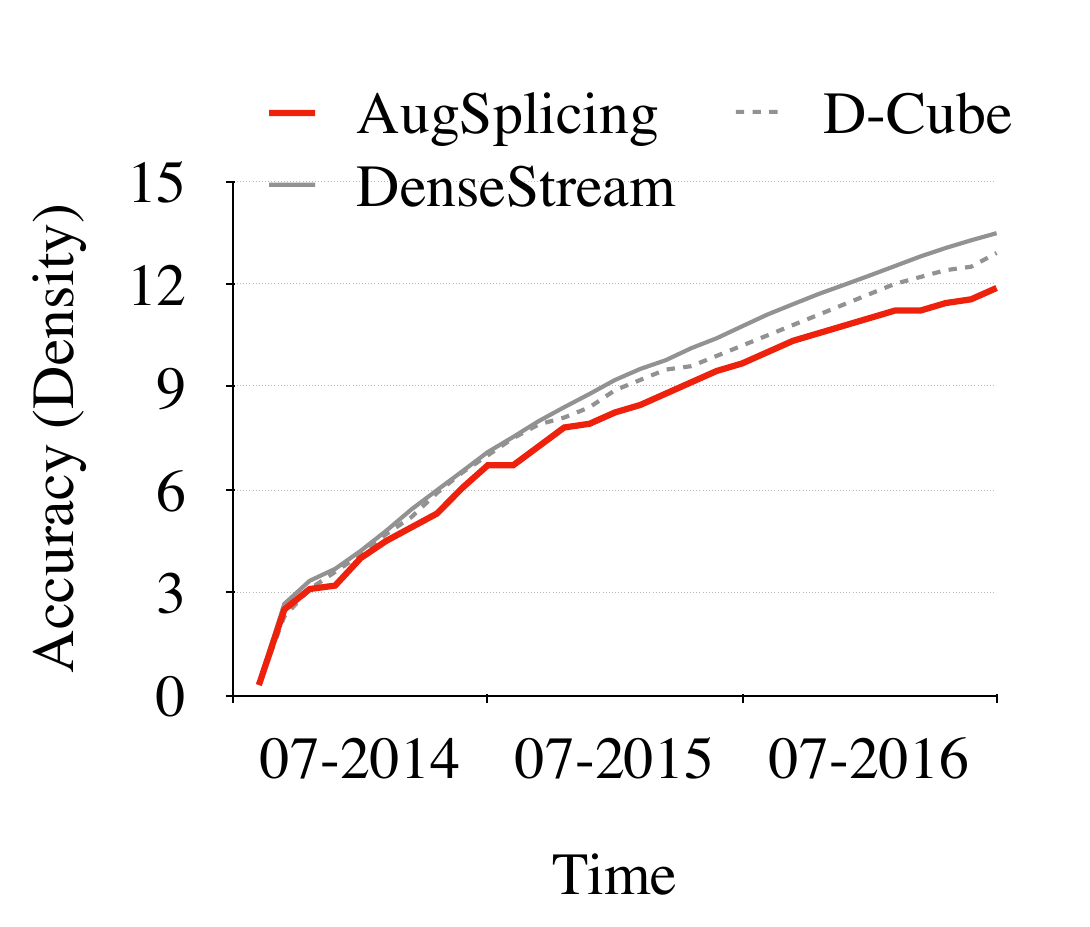}
\caption{Yelp data}
\label{fig:yelp_density}
\end{subfigure}
\begin{subfigure}{0.24\textwidth}
\includegraphics[width=\linewidth]{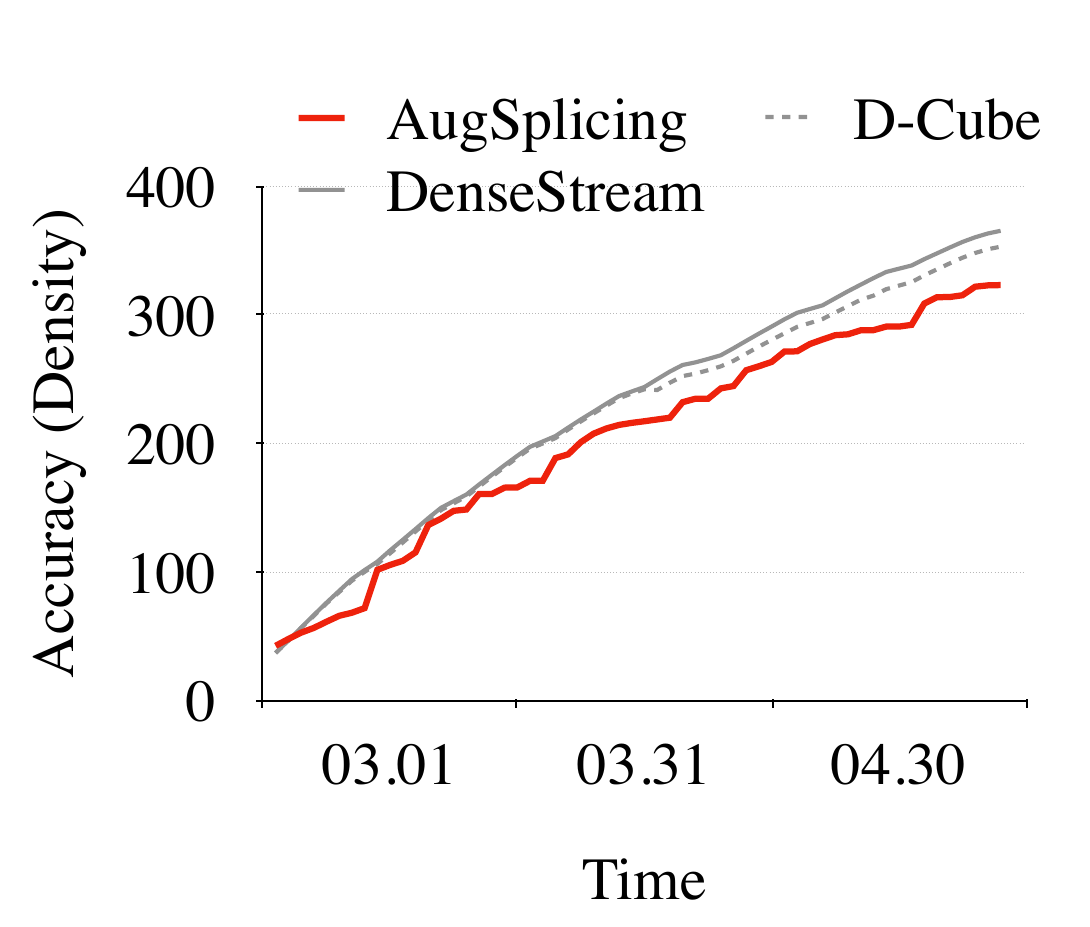}
\caption{Wi-Fi data}
\label{fig:wifi_density}
\end{subfigure}
\caption{\method has comparable accuracy (density) with the state-of-the-art methods. 
}
\label{fig:density}
\end{figure*}

We design the experiments to answer the following questions:

\textbf{Q1. Speed and Accuracy: } How fast and accurate does \algom run compared to the state-of-the-art streaming algorithms and the re-run of batch algorithms on real data? 

\textbf{Q2. Real-World Effectiveness:} 
Which anomalies or lockstep behavior does \method spot in real data?

\textbf{Q3. Scalability:} How does the running time of \algom increase as input tensor grows?

\begin{table}[!t]
\begin{center}

\begin{tabular}[width = \columnwidth]{@{~~}l@{~~}|@{~~}l@{~~}|@{~~}l@{~~}@{~~}l@{~~}}
\toprule
{\bf Name}  & {\bf Volume} & {\bf \# Edges}\\
\midrule
\multicolumn{4}{l}{Rating data (user, item, timestamp)}\\
\midrule
Yelp & 468K $\times$ 73.3K $\times$ 0.73K & 1.34M \\
BeerAdvocate  & 26.5K$\times$50.8K$\times$0.5K & 1.08M \\
\midrule
\multicolumn{4}{l}{mobile devices, app, installing time, uninstalling time}          \\
\midrule
App  & 2.47M$\times$17.9K$\times$30$\times$30  & 5.43M \\
\midrule
\multicolumn{4}{l}{device IP, Wi-Fi AP, connecting time, disconnecting time} \\
\midrule
Wi-Fi  & 119.4K$\times$0.84K$\times$1.46K$\times$1.46K & 6.42M \\
\bottomrule
\end{tabular}
\caption{Data Statistics.}
\label{tab:data}
\end{center}
\end{table}

\paragraph{Datasets:}
Table~\ref{tab:data} lists the real-world data used in our paper. Two rating data are publicly available. 
App data is mobile device-app installation and
uninstallation data under an NDA agreement from a company. Wi-Fi data is device-AP connection and disconnection data from Tsinghua University.

\paragraph{Experimental Setup:}
All experiments are carried out on a $2.3 GHz$  Intel Core i5 CPU with $8 GB$ memory. We compare our method with state-of-the-art streaming dense block detection method, \densestream, and the re-run of batch methods, \dcube, \cross, and CP Decomposition (\cpd). \dcube is implemented in Java to detect dense blocks \atn{in tensor $\TX(t,s)$}. Specifically, we use ``arithmetic average mass'' as the metric of \dcube. We use a variant of \cross which maximizes the same metric and use the \cpd result for seed selection similar to \cite{shin2016mzoom}.
We set time stride $s$ to $30$ in a day for Yelp, $15$ in a day for BeerAdvocate, $1$ in a day for App and Wi-Fi, as
different time granularity. $k$ is set to $10$ and $l$ to $5$ for all datasets.

\begin{figure*}[t]
\centering
\begin{subfigure}[t]{0.24\textwidth}
\centering
{\includegraphics[width=\linewidth]{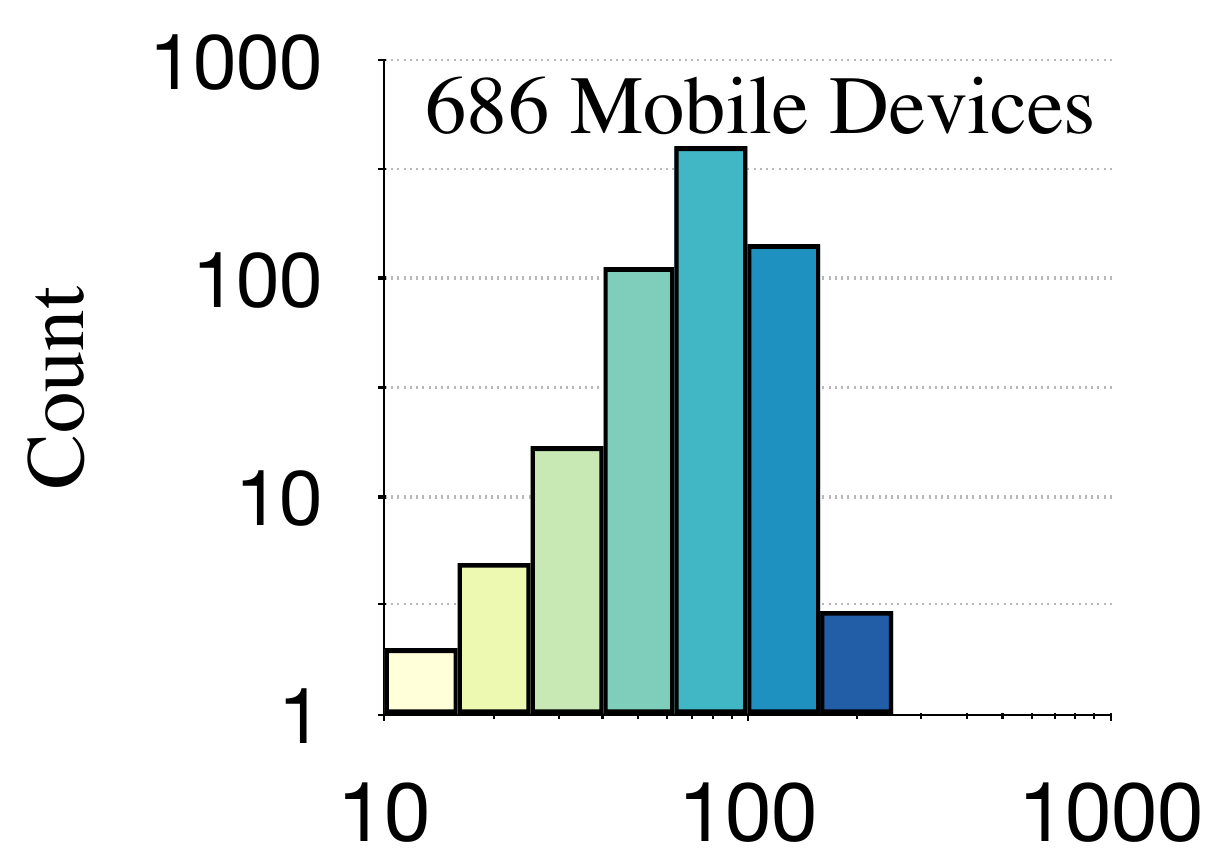}}
\caption{Degree of Devices}
\label{fig:appcase1}
\end{subfigure}
\begin{subfigure}[t]{0.24\textwidth}
\centering
{\includegraphics[width=\linewidth]{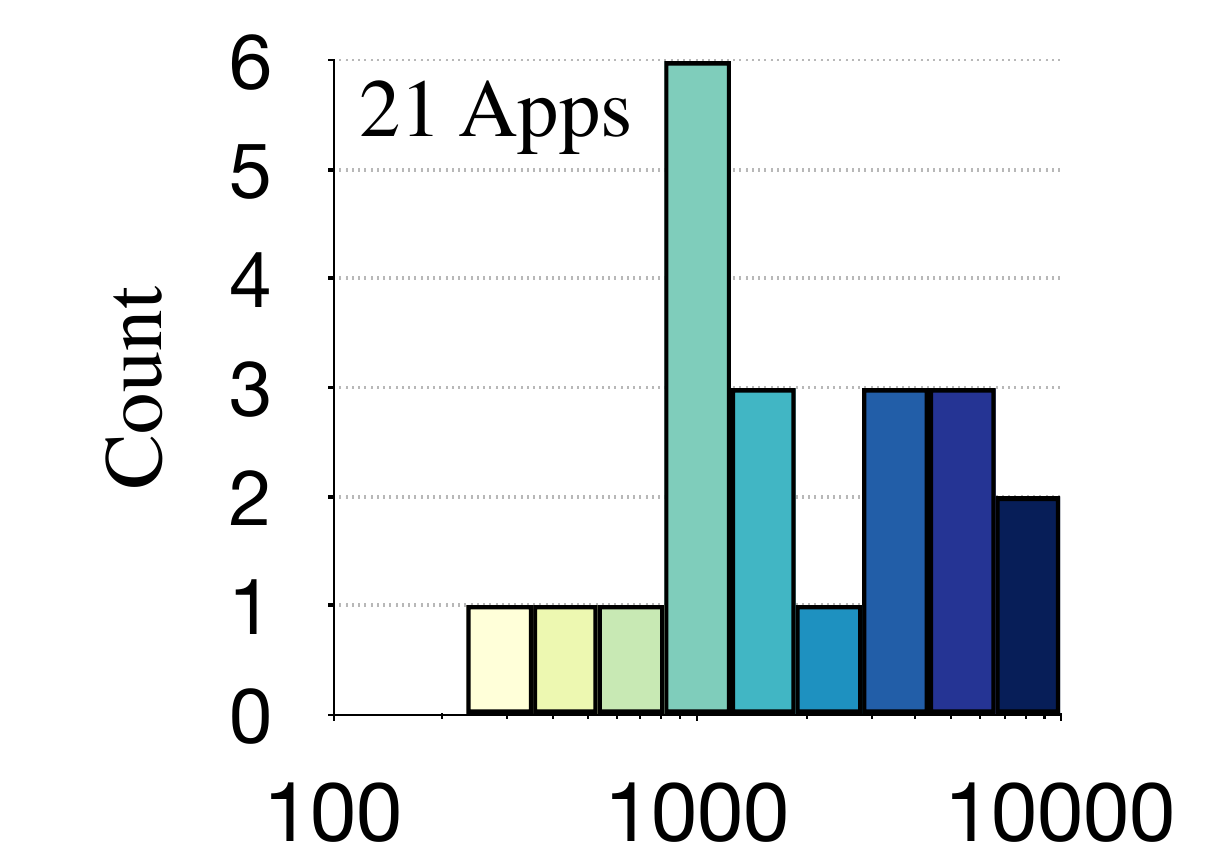}}
\caption{Degree of Apps}
\label{fig:appcase2}
\end{subfigure}
\begin{subfigure}[t]{0.24\textwidth}
\centering
{\includegraphics[width=\linewidth]{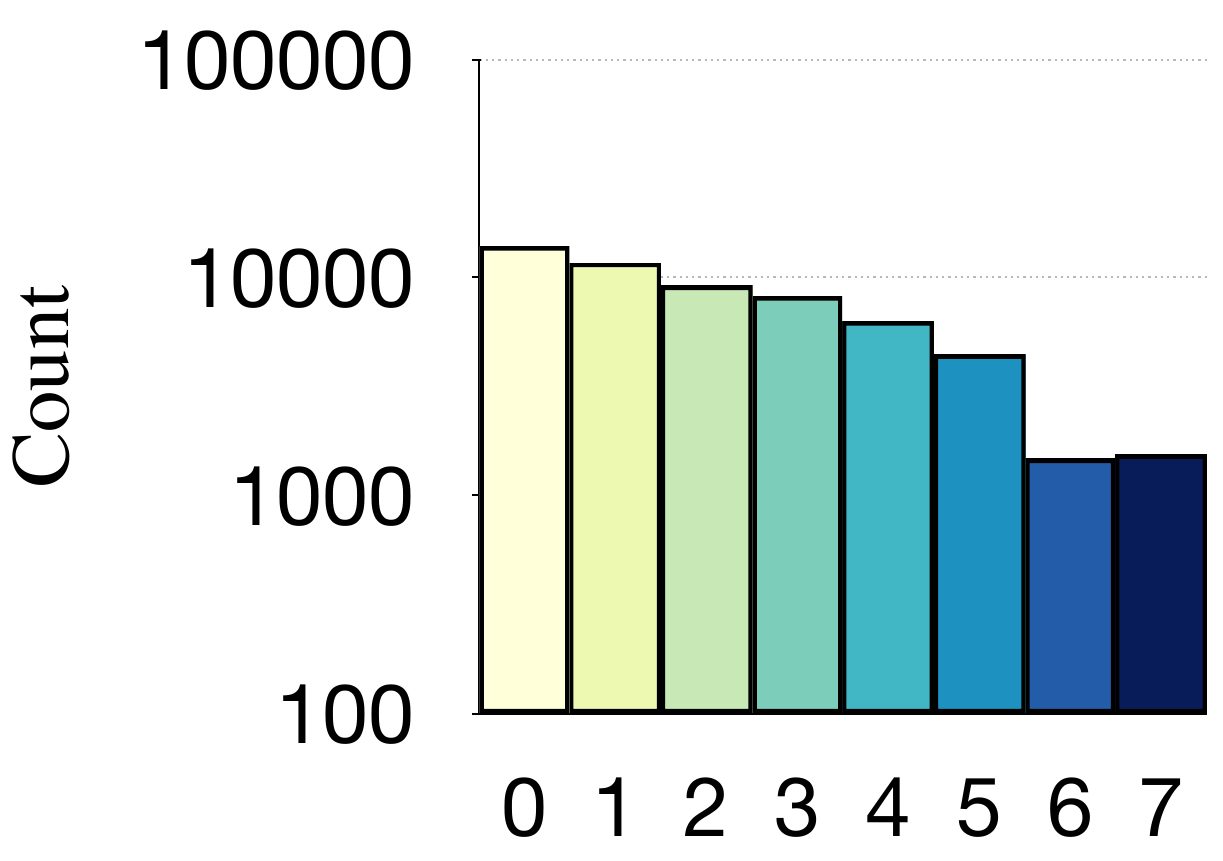}}
\caption{Installing Times}
\label{fig:appcase3}
\end{subfigure}
\begin{subfigure}[t]{0.21\textwidth}
\centering
{\includegraphics[width=\linewidth]{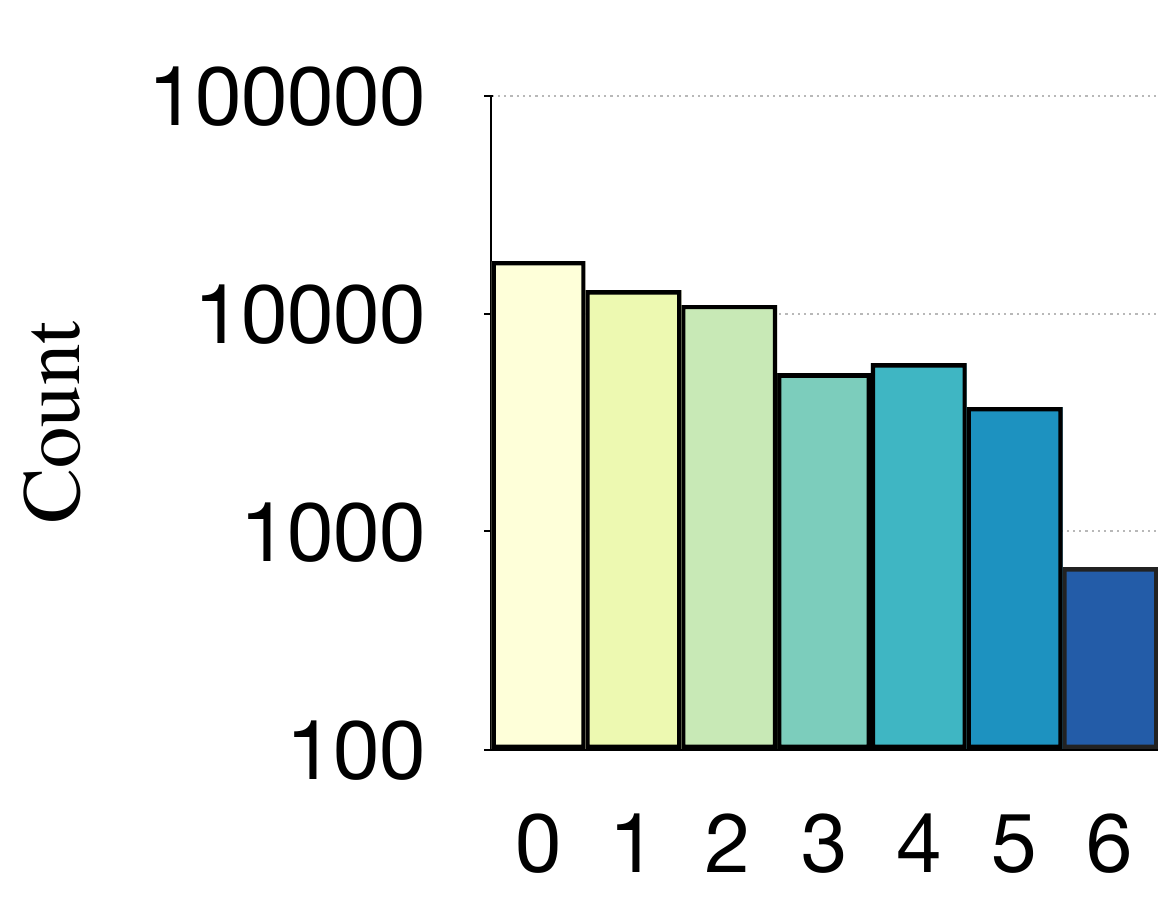}}
\caption{Staying Times}
\label{fig:appcase4}
\end{subfigure}
\caption{(a) and (b) show \method detects a real-world suspicious block that is explainable: $686$ mobile devices repeatedly installed and uninstalled $21$ apps $5.66\times10^4$ times in total, which is very unusual for a group of devices and apps. (c) and (d) show an 8-day installing time period and all that suspicious apps were uninstalled within one week, and most of them stayed only up to 3 days on a suspicious device.}
\end{figure*}

\begin{figure*}[t]
\begin{subfigure}[t]{0.24\textwidth}
\centering
{\includegraphics[width=\linewidth]{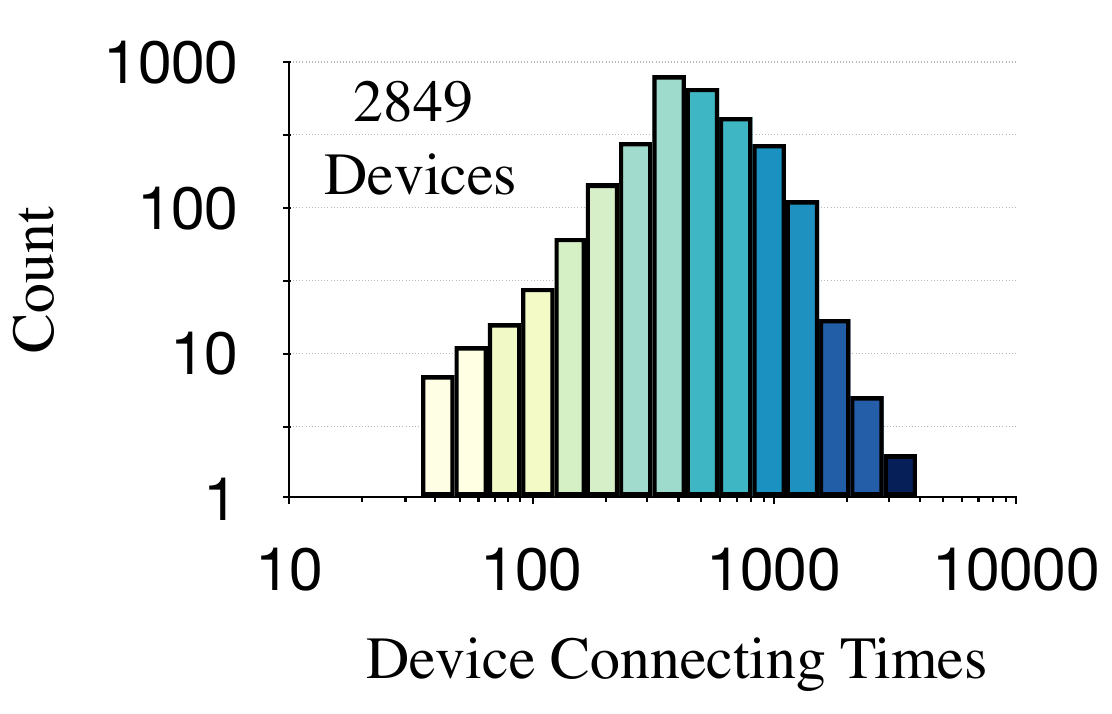}}
\caption{Degree of Devices}
\label{fig:wificase1}
\end{subfigure}
\begin{subfigure}[t]{0.24\textwidth}
\centering
{\includegraphics[width=\linewidth]{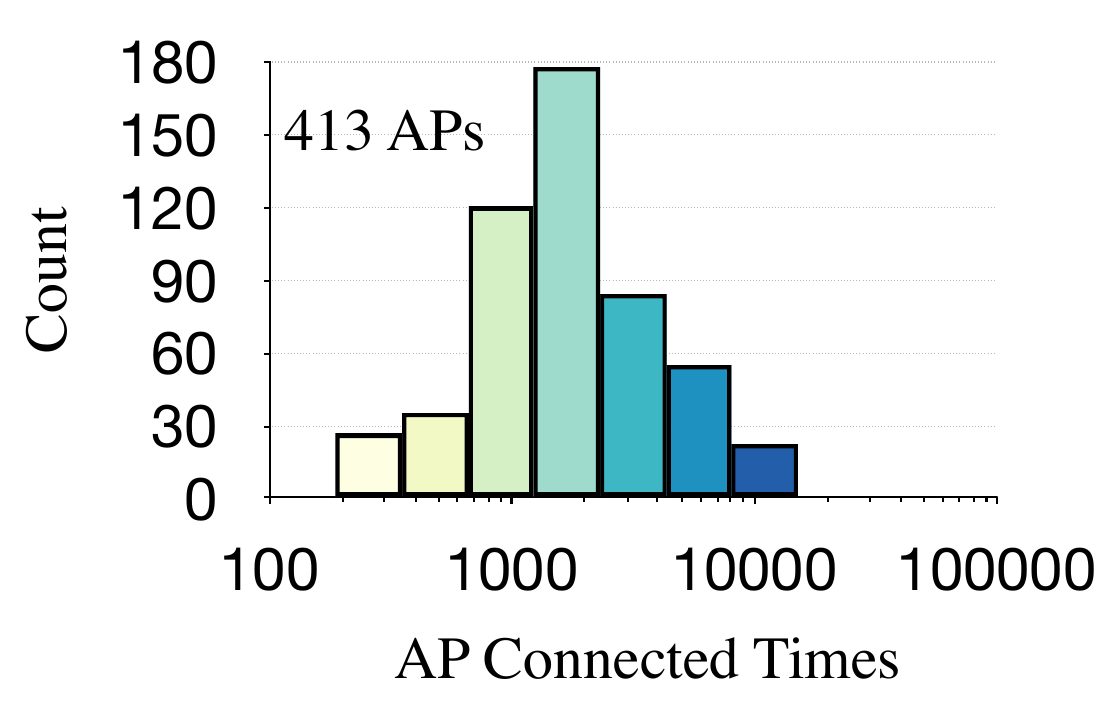}}
\caption{Degree of Wi-Fi APs}
\label{fig:wificase2}
\end{subfigure}
\begin{subfigure}[t]{0.24\textwidth}
\centering
{\includegraphics[width=\linewidth]{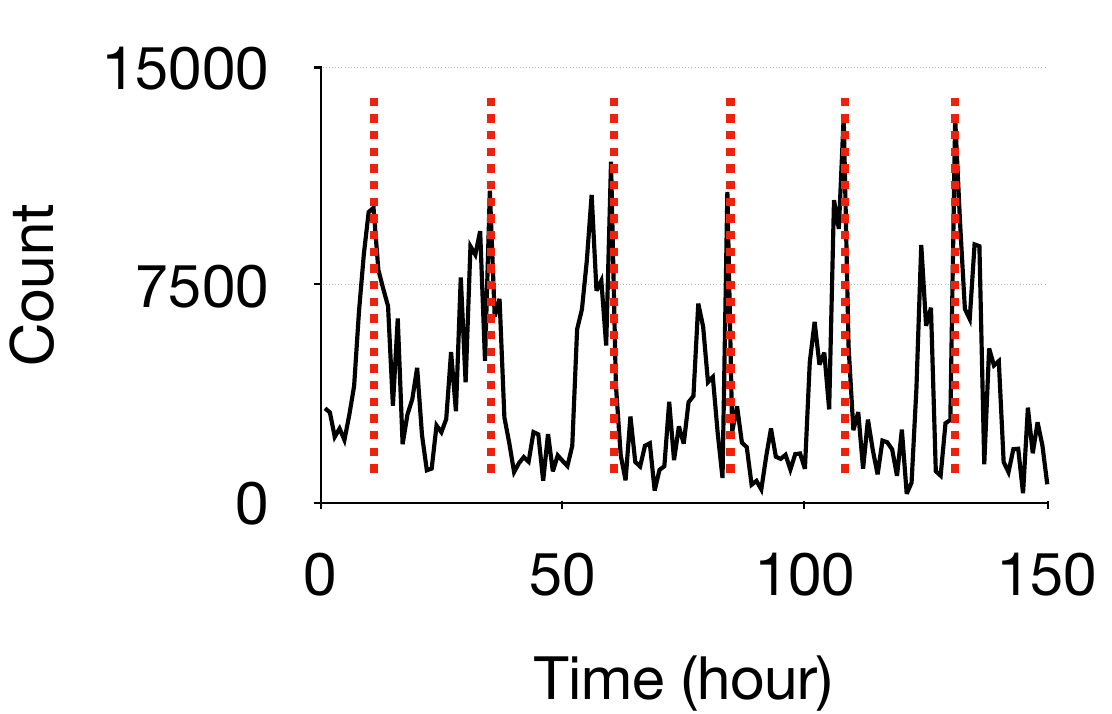}}
\caption{Connecting Time}
\label{fig:wificase3}
\end{subfigure}
\begin{subfigure}[t]{0.24\textwidth}
\centering
{\includegraphics[width=\linewidth]{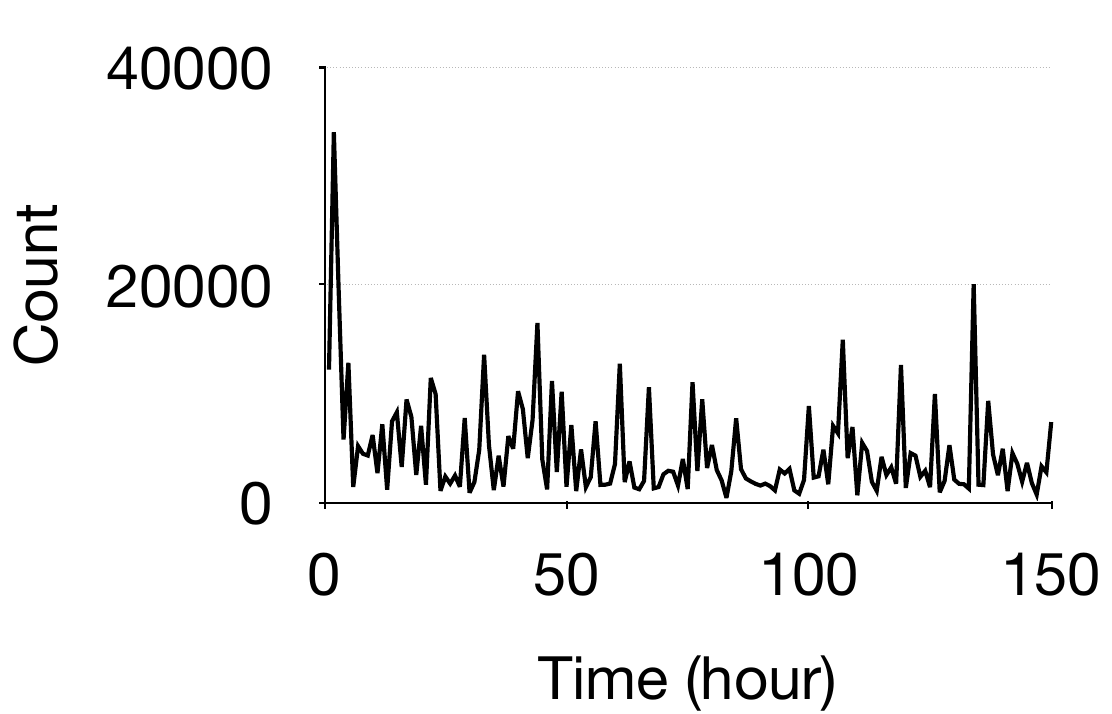}}
\caption{Staying Time}
\label{fig:wificase4}
\end{subfigure}
\caption{ \method finds a big community of students having synchronized busy schedule in Wi-Fi data. Wi-Fi connections reached the peak at around $10$ AM every day (red dash
    lines) in (c). Most of the connections stayed around $1$ to $2$ hours as shown in
    (d).}
\label{fig:wifi_distribution}
\end{figure*}

\subsection{Speed and Accuracy}
For detecting dense blocks in a streaming setting, our method only deals with augmenting tensor with \emph{stride} size at each time step, then it combines detected blocks of the incremental tensor with the previous results to detect dense blocks until now. In contrast, the batch algorithms are re-run for the holistic tensor from scratch at each time step
to detect dense blocks of augmenting tensors. \densestream needs to maintain a dense subtensor when a new entry comes, which is very time-consuming. 
We measure the wall-clock time taken by each method and the results are as shown in Figure~\ref{fig:time}. As we can see, 
\method is the fastest. It is $320\times$ faster than \densestream, $1.8\times$ faster than \dcube, $3.2\times$ faster than \cross and $13\times$ faster than \cpd on Wi-Fi dataset.

To demonstrate the accuracy of \algom, we track the density of the densest
block found by \method\ and other methods while the tensor augments at each time step \zhangatn{as~\cite{shin2017densealert} does} and the result is shown in
Figure~\ref{fig:beer_density}-\ref{fig:wifi_density}. We can see that the densest block has close density to that found by \densestream and the re-run of \dcube for long time steps, though accumulated error.

We now explain why \method achieves comparable high accuracy. Due to the skewness of real graphs, densities of top
dense blocks can be very skewed, which reduces the probability of the top $k$
dense blocks of $\TX(t+s)$ having overlapped modes with top $(k+l)$ or lower dense blocks in $\TX(t)$ and $\TX(t, s)$. 
Due to the principle of time locality, tuples of dense blocks will be close in mode $time$. Thus \method can detect top $k$ dense blocks with comparable high density by sufficient splices.

\textbf{Detection in injected attacks:}
For Yelp data, we injected $100$ fraudulent users and items in a week with the volume density ranging from $1$ to $0.1$. \atn{For app data, an app's rank is decided by its downloads, which improves by $1$ if the app is installed and remains more than the required amount of days by a mobile device in a real scenario.} Then we injected $500$ apps and $5000$ mobile devices, with the installing time uniformly distributed in $3$ days. The uninstalling time was the installing time plus a week with the volume density ranging from $2.0$ to $0.1$. \zhangatn{Intuitively, the smaller the density of injected blocks, the harder it is to detect, and the block with a density of $0.1$ is quite difficult to detect.} Figures~\ref{fig:yelp_inject}-\ref{fig:app_inject} show that F-measure of \method increases rapidly as density increases from $0.3$ to $0.5$ and remains higher than 90$\%$ when the density reaches to $0.5$, \zhangatn{presenting our ability in detecting fraudulent mobile devices and apps.}

\subsection{Effectiveness}
\textbf{Results on App data \zhangatn{with ground-truth labels}:}
\zhangatn{In this section, we verify that \method accurately detects a dense block of fraudulent accounts in App data, as verified by clear signs of fraud exhibited by a majority of detected mobile devices and apps.}
We collected the devices detected by all methods and manually labeled by the company who owns App data, 
based on empirical knowledge on features of devices: e.g. locations,
the number of installed apps, 
and the number of apps opened for the first time in a day, etc. 
For example, devices are identified as fraudulent if 
they appear in dozen cities in a day.

Figure~\ref{fig:realaccurate} shows 
both accuracy (F-measure) and speed (elapsed running time) for all comparison methods. We can see that the \method runs $320\times$ faster than the state-of-the-art streaming algorithm, \densestream, keeping comparable accuracy. Compared to the fast re-run \dcube, \method achieves $1.8\times$ faster, and much higher accuracy. Figures~\ref{fig:appcase1}-\ref{fig:appcase4} present the detailed information of the densest block detected by \method. We draw ``degree''
distributions for detected $686$ devices and $21$ apps in Figure~\ref{fig:appcase1} and \ref{fig:appcase2}. Note that the ``degree'' of a mobile device is \# of apps
installed by the mobile. Similarly ``degree'' of an app is the number of devices installing the app.
As a result, $365$ devices from $686$ detected devices have been identified as fraudsters by the company, which is a very high concentration in a fraud detection task, considering a small fraction of fraudulent devices over the whole devices. 
Actually, devices not identified as fraudsters are very suspicious by analyzing their behavior: $686$ mobile devices repeatedly installed and uninstalled $21$ apps $5.66\times10^4$ times, with up to $7100$ times for one app. Furthermore, all the installations were concentrated in a short time period (i.e.~$8$ days) and uninstalled within one week afterward (see Figures~\ref{fig:appcase3}-\ref{fig:appcase4}). It is very likely that these mobile devices boost these apps'
ranks in an app store by installations and uninstallations in lockstep.

\textbf{Results on Wi-Fi data:}
We discover synchronized patterns that may interest student administrators.
Figure~\ref{fig:wifi_distribution} shows the densest block detected by \method in Wi-Fi data.  
Figure~\ref{fig:wifi_distribution}a and
\ref{fig:wifi_distribution}b show $2849$ devices and $413$ Wi-Fi APs which had $8.03\times 10^{5}$ connections/disconnections in total, indicating a busy schedule for many students.
As shown in Figure~\ref{fig:wifi_distribution}c, behavior of this group of students was periodic and synchronized. 
Wi-Fi connections concentrated from $8$ AM - $5$ PM every day and reached a peak at around $10$ AM (red dotted line). 
That may be because students' first and second classes begin around $8$ AM and $10$ AM respectively.
Moreover, Figure~\ref{fig:wifi_distribution}d shows that most of the connections stayed around $1$-$2$ hours, which is the usual length of one class. 

 \subsection{Scalability}
 
Figure~\ref{fig:app_scala} and \ref{fig:wifi_scala} show that the running time of \algom scales linearly with the size of non-zero entries up to the current time step, consistent with the time complexity analysis result in Section 4.3, while the running time of the re-run of \dcube is quadratic in streaming tensors.
\begin{figure}[!t]
\centering
\begin{subfigure}{0.23\textwidth}
\includegraphics[width=\linewidth]{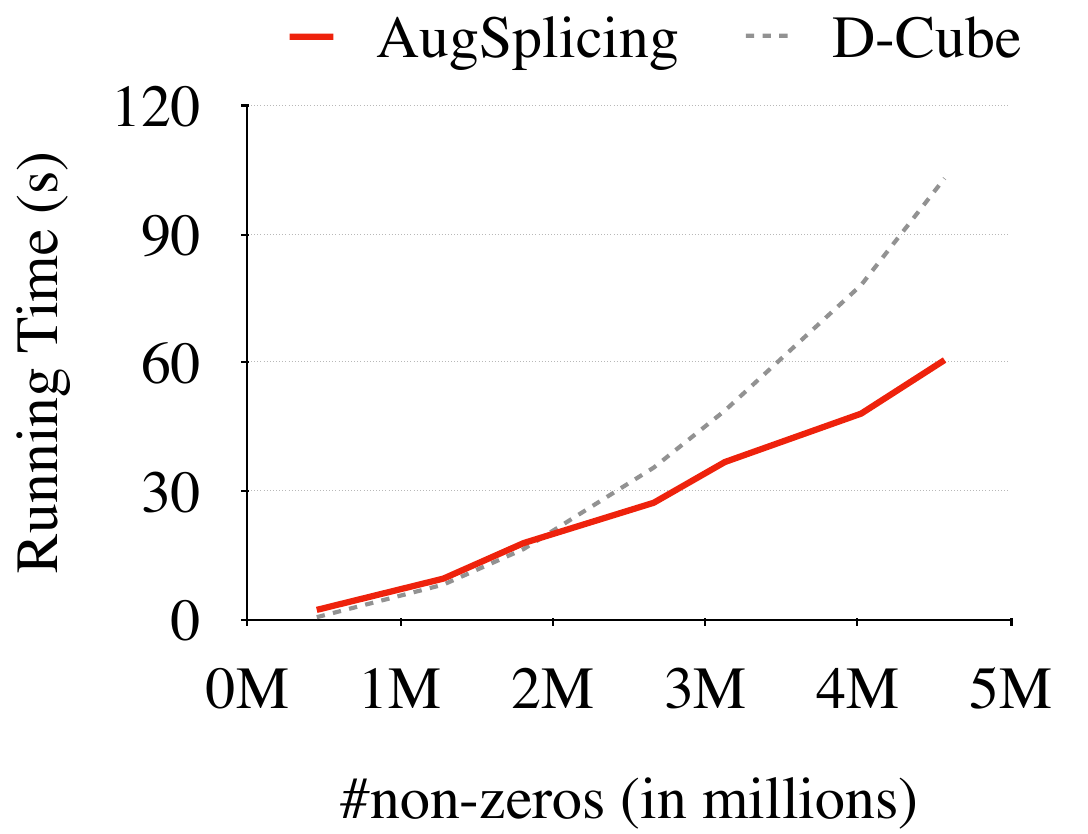}
\caption{App data}
\label{fig:app_scala}
\end{subfigure}
\begin{subfigure}{0.23\textwidth}
\includegraphics[width=\linewidth]{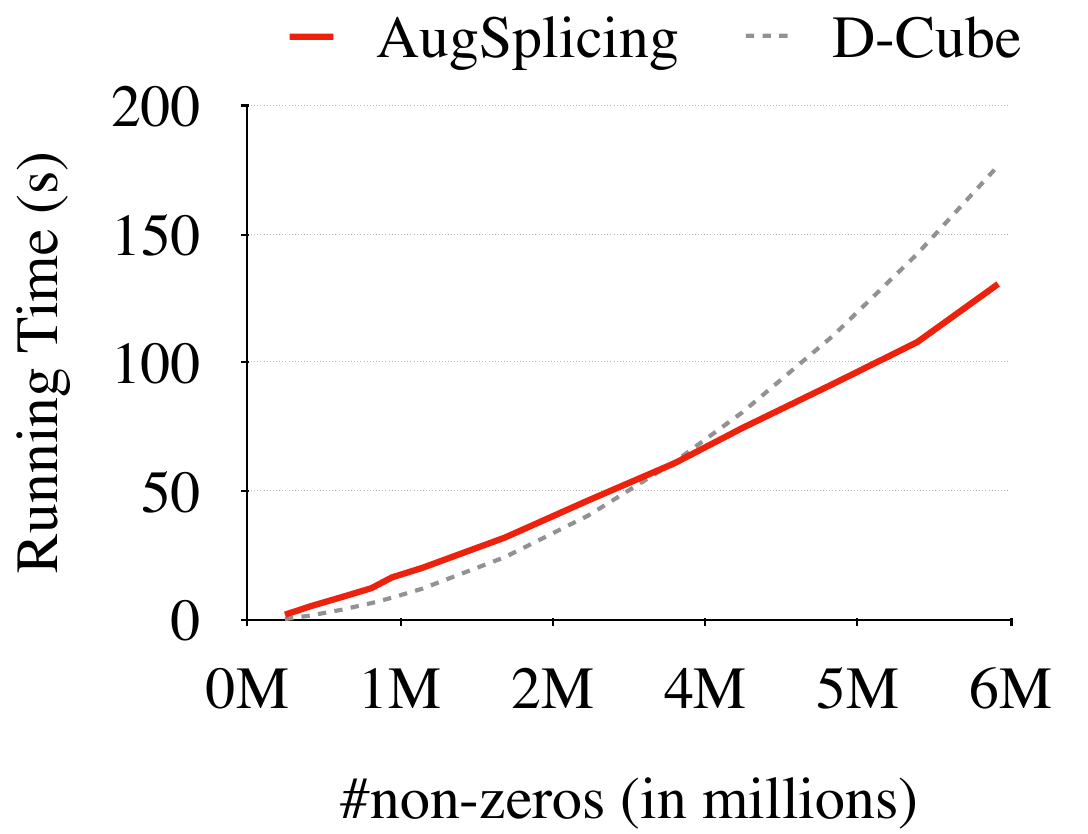}
\caption{Wi-Fi data}
\label{fig:wifi_scala}
\end{subfigure}
\caption{\method has near-linear running time. 
}
\label{fig:scala}
\end{figure}


\section{Conclusions}
\label{sec:cons}
In this paper, we model a stream of tuples as a streaming tensor, and propose a streaming algorithm, \method, to spot the most synchronized behavior which indicates anomalies or interesting patterns efficiently and effectively. Our main contributions are as follows: 
\begin{enumerate}
    \item \textbf{Fast and streaming algorithm:} Our approach can effectively capture synchronized activity in streams, up to $320\times$ faster than the best streaming method (Figure~\ref{fig:realaccurate}).
    \item \textbf{Robustness:} Our method is robust with theory-guided incremental splices for dense block detection.
    \item \textbf{Effectiveness:} Our method is able to detect anomalies and discover interesting patterns accurately in real-world data (Figures~\ref{fig:appcase1}-\ref{fig:appcase4} and Figures~\ref{fig:wificase1}-\ref{fig:wificase4}).
\end{enumerate}

\clearpage

\section*{Ethical Impact}
\label{sec:impact}
We contribute a fast incremental algorithm to detect dense blocks formed by synchronized behavior in a stream of time-stamped tuples. Our work has wide applications on anomaly detection tasks, e.g.~app store fraud detection (e.g.~suspicious mobile devices boosting target apps' ranks in recommendation list in an app store), rating fraud detection in review cites and etc. In addition, our work can be applied to discover interesting patterns or communities in real data (e.g.~revealing a group of students having the same classes of interest). Our approach can scale to very large data, update the estimations when data changes over time efficiently, and incorporate all the information effectively. Our work is even more applicable to online data mining tasks, especially when large-sized data arrives at a high rate. While most experiments are cyber-security related, one experiment detects student communities from Wi-Fi data. From a societal impact, potential misuse against privacy has to be taken care of.

\section*{Acknowledgments}
\label{sec:ack}
This paper is partially supported by the National Science Foundation of China under Grant No.U1911401, 61772498, 61872206, 91746301. This paper is also supported by the Strategic Priority Research Program of the Chinese Academy of Sciences, Grant No. XDA19020400, National Key R\&D Program of China (2020AAA0103502) and 2020 Tencent Wechat Rhino-Bird Focused Research Program.

\bibliography{paper}

\end{document}